\newtheorem{Thm}{Theorem}
\newtheorem{Prop}{Proposition}
\theoremstyle{definition}
\newtheorem{Def}{Definition}
\newtheorem{Cor}{Corollary}
\newcommand{\Z}{\mbox{$\mathbb Z$}}
\newcommand{\cI}{\mathcal{I}}
\newcommand{\Aff}{\mbox{\rm Aff}}
\newcommand{\Affq}{{\rm Aff}_q}
\newcommand{\F}{\mathbb F}
\begin{document}
\title{Quantum computational algorithm for hidden symmetry subgroup problems
on semi-direct product of cyclic groups}

\author{Jeong San Kim}\email{freddie1@suwon.ac.kr}
\affiliation{
 Department of Mathematics, University of Suwon, Kyungki-do 445-743, Korea
}
\author{Eunok Bae}\email{eobae@khu.ac.kr}
\affiliation{
 Department of Mathematics and Research Institute for Basic Sciences,
 Kyung Hee University, Seoul 130-701, Korea
}
\author{Soojoon Lee}\email{level@khu.ac.kr}
\affiliation{
 Department of Mathematics and Research Institute for Basic Sciences,
 Kyung Hee University, Seoul 130-701, Korea
}
\date{\today}

\begin{abstract}
We characterize the algebraic structure of semi-direct product of cyclic groups,
$\Z_{N}\rtimes\Z_{p}$,
where $p$ is an odd prime number
which does not divide $q-1$ for any prime factor $q$ of $N$,
and provide a polynomial-time quantum computational algorithm
solving hidden symmetry subgroup problem of the groups.
\end{abstract}
\pacs{
03.67.Lx, 
02.20.Bb  
}
\maketitle


\section{Introduction}
Most of exponential speed-up of quantum computational algorithms
can be regarded as solving group-theoretical problems
that can be formulated within the framework of {\em hidden subgroup problem} (HSP).
Mathematically, HSP can be cast in the following terms;
given a finite group $G$ and
an oracle function (or black-box function) $f$ from $G$ to some finite set,
we say that $f$ hides a subgroup $H$ of $G$
provided that $f(a)=f(b)$ if and only if $Ha=Hb$ for all $a$ and $b$ in $G$
(that is, $a$ and $b$ belong to the same right coset of $H$),
and the task is to determine the subgroup $H$.

Whereas no classical algorithm is known to solve HSP
with polynomial query complexity
as well as with polynomial running time in the size of the group,
for every abelian group
there exists a quantum algorithm
which can efficiently solve the problem~\cite{sho97,BoLi,Kitaev},
and furthermore for an arbitrary group
there exists a quantum algorithm
which can solve the problem
with polynomial quantum query complexity~\cite{EHK04}.
In other words, HSP on an abelian group $G$ can be solved
by a quantum algorithm of running time polynomial in $\log|G|$,
and HSP on non-abelian groups can be solved
by quantum algorithms with polynomial query complexity,
although the algorithms cannot efficiently solve the problem in general.

HSP includes several algorithmically important problems;
graph isomorphism problem can be reduced to the HSP on the symmetric group
and certain lattice problems can be reduced to the HSP on the dihedral group~\cite{iso,lattice}.
For these reasons, a lot of attempts have been made
to generalize the quantum solution of the abelian HSP
to non-abelian cases~\cite{ettinger,HRT,GSVV,friedl,kuperberg,gavinsky,inui,moore,regev,BCD}.
However, the HSP for the dihedral and symmetric groups still remains unsolved.
Furthermore, the methods for solving HSP of abelian cases are known to fail for
several non-abelian groups~\cite{MRS,hmrrs}. Thus a direct generalization of quantum solutions
for abelian HSP to non-abelian cases seems rather exclusive.

Another approach toward the study of HSP is to generalize the problem itself, that is, to consider problems
dealing with more general properties of algebraic sets hidden by the oracle functions.
One of these problems is the {\em hidden symmetry subgroup problem} (HSSP)~\cite{DISW},
which can be formulated as follows;
for a group $G$ acting on some finite set $M$
and an oracle function whose level sets define a partition of $M$ by the group action,
the object we would like to recover is the group of symmetries of this partition inside $G$,
that is, the subgroup whose orbits under the action coincide with the classes of the partition,
as we will see the details in Section~\ref{Sec:Semi-Direct}.

HSP can be regarded as a special case of the HSSP
when the group acts on itself and the action corresponds to the group operation.
However, certain cases of the HSSP have exponential quantum query complexity,
in contrast to the fact that the quantum query complexity of HSP for any group is polynomial.
Thus we can say that HSSP is generally harder than HSP for some actions.
Recently, Decker, Ivanyos, Santha and Wocjan presented
an efficient quantum algorithm for HSSP on the Frobenius group,
which includes a large variety of affine groups as a special case, 
by showing that
HSSP is indeed efficiently reducible to HSP
when the action has an efficiently computable generalized base,
and that such bases for a large class of Frobenius groups
can be efficiently constructed~\cite{DISW}.

In this paper,
we first investigate algebraic properties of semi-direct product of cyclic groups,
and then construct an efficient reduction scheme of HSSP on $\Z_{N}\rtimes\Z_{p}$ to its related HSP
for the case when any prime factor $q$ of $N$
satisfies the condition that $q-1$ is not divisible by $p$.
Finally, we propose an efficient quantum algorithm
for HSSP on the group
by applying this reduction scheme to
an efficient quantum computational algorithm for the related HSP~\cite{CKL}.

This paper is organized as follows.
In Section~\ref{Sec:Semi-Direct}
we briefly review some algebraic properties and the definition of HSSP,
and in Section~\ref{A reduction scheme of HSSP to HSP}
we recall a sufficient condition of group actions proposed in~\cite{DISW},
under which a HSSP can be reduced in polynomial time to a HSP.
In Section~\ref{sec: HSSPcyclic1}
we provide some homomorphic properties of semi-direct product of cyclic groups,
and characterize its stabilizer subgroups,
and in Section~\ref{sec: HSSPcyclic2}
we show that there exist an efficient quantum algorithm
which can solve HSSP on $\Z_{N}\rtimes\Z_{p}$,
where $p$ is an odd prime number
which does not divide $q-1$ for any of the prime factors $q$ of $N$.
Finally, we summarize our result in Section~\ref{Conclusion}.

\section{Preliminaries}\label{Sec:Semi-Direct}

A group action of a group $G$ on a set $M$ is
a binary function $ \circ: G \times M \to M$ (with the notation $\circ (g,m)$ = $g \circ m$),
which satisfies $ g \circ (h \circ m) = (gh) \circ m $
and $e \circ m= m $ for any $g, h \in G$, $m \in M$ and the identity element $e$ of $G$.
We denote $g \circ L = \{g \circ m : m \in L\}$ for a subset $L \subseteq M$.

For each $m \in M$, its {\em stabilizer} subgroup $G_m$ is defined as $\{g \in G : g \circ m = m\}$,
which consists of the elements in $G$ fixing $m$ under the group action.
The group action $\circ$ of $G$ on $M$ is {\em faithful}
if $\bigcap_{m \in M} G_m = \{e\}$, that is,
$e$ is the only element of $G$ that fixes every element of $M$.
For any subgroup $H$ of $G$,
$H$ also acts naturally on $M$.
The $H$-{\em orbit} of $m \in M$
is the subset of $M$ defined as $H \circ m = \{h \circ m : h\in H\}$.
The $H$-orbits form a partition $H^* =\{ H \circ m : m \in M \}$ of $M$.
For a partition $\pi=\{\pi_1, \ldots, \pi_\ell\}$ of the set $M$,
{\em the group of symmetries} of $\pi$
is the subgroup $\pi^*   =\{ g \in G : (\forall i) \; g \circ \pi_i = \pi_i \}$,
which consists of the elements stabilizing every class of the
partition $\pi$ under the group action.

The subgroup $H^{**}$  of $G$ is the {\em closure} of $H$~\cite{Blyth},
which consists of the elements in $G$ stabilizing every $H$-orbit.
The closure of a partition $\pi$ is $\pi^{**}$,
which consists of the orbits of its group of symmetries.
We note that $H$ is always a subgroup of $H^{**}$
and $H$ is said to be {\em closed} if $H = H^{**}$, that is,
there exists a partition $\pi$ such that $H = \pi^{*}$.
Similarly, $\pi$ is said to be closed if $\pi = \pi^{**}$.
We denote by ${\cal C}(G)$ the family of all closed subgroups in $G$.

Now let us recall the formal definition of the HSSP~\cite{DISW};
for a finite group $G$, a finite set $M$,
an action $\circ: G \times M \to M$ and a family ${\cal H}$ of closed subgroups of $G$,
let us assume that an oracle function $f$ is given,
which is defined on $M$ to some finite set $S$ such that
$f(x) = f(y)$ if and only if $H \circ x = H \circ y$ for some subgroup $H \in {\cal H}$.
The HSSP is to determine the subgroup $H$.

The subsets of $M$ whose elements have the same
function value of $f$ form a partition of $M$, denoted by $\pi_f$.
Each party of this partition is called a level set of $f$.
Although there can be several subgroups of $G$ whose orbits coincide with
the level sets of $f$, the closures of these subgroups are the
same. The unique closed subgroup that satisfies the promise is
$\pi_f^*$, and this is the output of the HSSP.
($f$ is said to {\em hide $H$ by symmetries}.)


For a prime power $q$, the general affine group $\Aff_q$ is
the group of invertible affine transformations over the $\F_q$,
which can be represented as the {\em semi-direct product} of groups;
for finite groups $K$, $H$
and a homomorphism $\phi:h\mapsto \phi_h$ from $H$ to the group of automorphisms of $K$,
the semi-direct product of $K$ and $H$, denoted by $K\rtimes_\phi H$,
is the cartesian product of $K$ and $H$ with the group operation defined as
$(k,h)\cdot (k',h')=(k\cdot\phi_h(k'),h\cdot h')$.
(We use the notation $K\rtimes H$ for $K\rtimes_\phi H$
whenever $\phi$ is clear from the context.)

Using the notion of semi-direct product,
$\Aff_q$ can be represented as $\F_q \rtimes \F_q^*$,
where $\F_q^*$ denotes the multiplicative group of $\F_q$.
The natural group action of $\Affq$ on $\F_q$ is defined as $(b,a) \circ x = ax +b$.
For each $c \in \F_q$, the stabilizer of $c$ is the subgroup $H_c= \{ ((1-a)c,a) : a \in \F_q^*\}$.
$H_c$ is a closed subgroup,
and it has two orbits $\{c\}$ and $\{d \in \F_q : d \neq c \}$.

By letting ${\cal H} = \{H_c : c \in \F_q\}$,
Grover's search over $\F_q$ to find $c$ can be regarded as
a HSSP to find a closed subgroup $H_c$ in $\cal H$;
for any input $x$ and the oracle function $f_c$ such that
$f_c(x) = \delta_{c,x}$, where $\delta_{c,x}$ is the Kronecker delta,
$f_c$ hides $H_c$ as a symmetry subgroup.
Since we can recover $c$ from any generator $(b,a)$ of $H_c$
simply by computing $(1-a)^{-1}b$,
the query complexity of the HSSP is at least that of Grover's search.
Because Grover's search has query complexity $\Omega(q^{1/2})$~\cite{BBBV97},
it can be shown that
the query complexity of HSSP on the affine group $\Affq$ over
$\F_q$ is $\Omega(q^{1/2})$.

\section{A reduction scheme of HSSP to HSP}
\label{A reduction scheme of HSSP to HSP}
In this section, we recall a general condition of the group action,
under which a HSSP can be reduced in polynomial time to a HSP~\cite{DISW}.
For a given oracle function $f$ over $M$,
which hides some subgroup $H$ of $G$ by symmetries,
we construct a suitable function $f_{\rm HSP}$ over $G$, which hides $H$.

\begin{Def}
\label{Def:strongbase-def}
For a finite group $G$ and a group action $\circ: G \times M \to M$ of $G$ on the finite set $M$,
let $H$ be a subgroup of $G$, and ${\cal H}$ be a family of subgroups of $G$ including $H$.
A set $B \subseteq M$ is said to be an $H$-{\em strong base} if
\begin{equation}
\bigcap_{m \in B} H G_{g \circ m} = H,
\label{eq:strong_base_def}
\end{equation}
for every $g \in G$ and the stabilizer subgroup $G_{g \circ m}$ of $g \circ m$.
$B$ is said to be an ${\cal H}$-{\em strong base}
when it is $H$-strong for every subgroup $H \in {\cal H}$.
\end{Def}

We note that $\bigcap_{m\in M}HG_m=H^{**}$.
Thus $M$ itself is always a ${\cal C}(G)$-strong base.
Furthermore, if $B$ is an $H$-strong base,
then $B$ is also an $(x^{-1}Hx)$-strong base for every $x \in G$.
Therefore, we can show that
if ${\cal H}$ consists of conjugated subgroups
then $B$ becomes an ${\cal H}$-strong base
when it is an $H$-strong base for some $H \in {\cal H}$,
and that if $\cal H$ is closed under conjugation by elements of $G$ then
$B$ is an ${\cal H}$-{strong base} if and only if
$\bigcap_{m \in B} H G_{m} = H$ for every $H\in {\cal H}$.

Based on the concept of ${\cal H}$-strong bases,
the authors in Ref.~\cite{DISW} proposed a reduction scheme from a HSSP to a HSP.
\begin{Prop}\label{Prop:reduction}
Let $G$ be a finite group, and let $\circ$ be  an action of $G$ on $M$.
Suppose that the function $f : G \rightarrow  S$ hides some $H \in {\cal H}$ by symmetries.
Let $B = \{m_1, \ldots, m_t\}$ be an ${\cal H}$-strong base.
Then $H$ is hidden by the function $f_{\rm HSP}(g)= (f (g \circ m_1),\ldots, f(g \circ m_t))$.
\end{Prop}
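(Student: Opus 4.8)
We need to prove that if $f$ hides $H \in \mathcal{H}$ by symmetries (so $f(x) = f(y)$ iff $H \circ x = H \circ y$), and $B = \{m_1, \ldots, m_t\}$ is an $\mathcal{H}$-strong base, then $f_{\mathrm{HSP}}(g) = (f(g \circ m_1), \ldots, f(g \circ m_t))$ hides $H$ in the HSP sense.

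The HSP definition: $f_{\mathrm{HSP}}$ hides $H$ means $f_{\mathrm{HSP}}(g) = f_{\mathrm{HSP}}(g')$ iff $Hg = Hg'$.

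Let me set up the proof plan.

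**The forward/easier direction:** If $Hg = Hg'$, I want $f_{\mathrm{HSP}}(g) = f_{\mathrm{HSP}}(g')$. Write $g' = hg$ for some $h \in H$. Then $g' \circ m_i = h \circ (g \circ m_i)$. Since $f$ hides $H$ by symmetries, $f(h \circ x) = f(x)$ for any $x$ (because $H \circ (h \circ x) = H \circ x$, as $H$ is a subgroup). So each coordinate agrees, giving $f_{\mathrm{HSP}}(g) = f_{\mathrm{HSP}}(g')$.

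**The harder direction:** If $f_{\mathrm{HSP}}(g) = f_{\mathrm{HSP}}(g')$, I must show $Hg = Hg'$. The equality of HSP-values means, for each $i$, $f(g \circ m_i) = f(g' \circ m_i)$, hence by the symmetry-hiding property $H \circ (g \circ m_i) = H \circ (g' \circ m_i)$. This is the key: for each $i$ there exists $h_i \in H$ with $g' \circ m_i = h_i \circ (g \circ m_i)$, i.e. $h_i^{-1} g' \circ m_i = g \circ m_i$, so $g^{-1} h_i^{-1} g'$ stabilizes $m_i$. Rearranging, $g' \in h_i \, g \, G_{m_i}$... but more usefully I want to relate $g'g^{-1}$ to $H$ via the strong-base condition.

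**Extracting the main steps.** Let $x = g'g^{-1}$ (or $g^{-1}g'$, to be chosen so the algebra matches the definition). From $H \circ (g \circ m_i) = H \circ (g' \circ m_i)$ I get that $g'$ and $g$ send $m_i$ into the same $H$-orbit, which translates into a statement $x \in H \cdot (\text{something involving } G_{g \circ m_i})$. The definition of $H$-strong base, $\bigcap_{m \in B} H G_{g\circ m} = H$ for every $g$, is precisely engineered so that the conjunction over all $i$ of these per-coordinate containments collapses to $x \in H$. I expect the cleanest route is: show $g' \in \bigcap_i H G_{g \circ m_i} \cdot g$, then invoke the strong-base identity to conclude $g' \in Hg$, i.e. $Hg' = Hg$.

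**The main obstacle.** The subtle point is getting the bookkeeping of left versus right cosets and the order of composition exactly right, so that the per-coordinate fact "$g,g'$ agree modulo $H$-orbit at $m_i$" becomes membership in the set $HG_{g \circ m_i}$ appearing in Definition~\ref{Def:strongbase-def}, with the correct base point $g \circ m_i$ (not $m_i$). In particular I must verify that $g' \circ m_i$ and $g \circ m_i$ lying in the same $H$-orbit is equivalent to $g'g^{-1} \in H\,G_{g\circ m_i}$; this uses the orbit–stabilizer relationship that the stabilizer of $g \circ m_i$ controls which group elements map $g \circ m_i$ to a given target, combined with the fact that $H\circ(g\circ m_i)=H\circ(g'\circ m_i)$ means some $h$ takes one to the other. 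Once this equivalence is pinned down, the proof is just the intersection identity of the strong base applied for this particular $g$, so the crux is entirely in translating the symmetry-hiding hypothesis into the algebraic membership that the strong-base definition is tailored to absorb.
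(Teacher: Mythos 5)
The paper does not prove this proposition; it is quoted verbatim from Ref.~\cite{DISW} as an imported result, so there is no in-paper proof to compare against. Your argument is nevertheless correct and is the standard one: the forward direction follows from $f(h\circ x)=f(x)$ for $h\in H$, and the reverse direction follows from the equivalence $H\circ(g'\circ m_i)=H\circ(g\circ m_i)\iff g'g^{-1}\in HG_{g\circ m_i}$ (via $gG_{m_i}g^{-1}=G_{g\circ m_i}$), after which the strong-base identity $\bigcap_i HG_{g\circ m_i}=H$ gives $g'\in Hg$. You also implicitly (and rightly) read the hypothesis as $f\colon M\to S$ rather than the paper's misprinted $f\colon G\to S$; the only thing left to do is to write out the flagged equivalence explicitly, which goes through exactly as you describe.
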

Proposition~\ref{Prop:reduction} implies that
if we can choose a proper subset $B$ of $M$,
which is an ${\cal H}$-strong base,
the HSSP can always be reduced to the HSP.
Furthermore, it naturally leads us to the following proposition,
which provides a sufficient condition
for a polynomial-time reducibility of HSSP to its related HSP.
\begin{Prop}\label{prop:reduction}
Let $G$ be a finite group, $M$ a finite set, $\circ$ a polynomial time computable
action of $G$ on $M$, and ${\cal H}$ a family of subgroups of $G$.
If there exists an efficiently computable ${\cal H}$-strong base
in $M$, then HSSP on the group $G$ is polynomial time reducible to HSP of $G$.
\end{Prop}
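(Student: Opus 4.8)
The plan is to reduce the given HSSP instance to a single HSP instance by packaging the action of $G$ on the elements of a strong base into a new oracle, and then to check that every step of this packaging runs in polynomial time. The correctness of the packaging is already supplied by Proposition~\ref{Prop:reduction}, so the real work here is the accounting of the polynomial bounds rather than any new combinatorial content.

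First I would run the assumed efficient procedure to produce an ${\cal H}$-strong base $B = \{m_1, \ldots, m_t\} \subseteq M$. Because this procedure terminates in time polynomial in $\log|G|$, it can only output polynomially many symbols, so its cardinality $t$ is bounded by a polynomial in $\log|G|$; this quantitative consequence of the efficiency hypothesis is exactly what makes the downstream oracle cheap. I would then define the candidate HSP oracle $f_{\rm HSP}(g) = (f(g \circ m_1), \ldots, f(g \circ m_t))$ on $G$, where $f$ is the given HSSP oracle hiding some $H \in {\cal H}$ by symmetries.

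Next I would separate correctness from efficiency. Correctness is immediate from Proposition~\ref{Prop:reduction}: since $f$ hides some $H \in {\cal H}$ by symmetries and $B$ is an ${\cal H}$-strong base, the function $f_{\rm HSP}$ hides precisely that $H$ in the ordinary hidden-subgroup sense on $G$. For efficiency, each evaluation $f_{\rm HSP}(g)$ requires computing the $t$ points $g \circ m_1, \ldots, g \circ m_t$, each in polynomial time because $\circ$ is polynomial-time computable by hypothesis, followed by one query to $f$ per point. Hence a single query to $f_{\rm HSP}$ costs $t$ queries to $f$ together with $t$ action evaluations, and since $t$ is polynomial the total cost per query remains polynomial.

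Finally I would assemble the reduction: feeding $f_{\rm HSP}$ into any HSP algorithm for $G$ returns $H$, which is exactly the output required by the HSSP. The total overhead above the HSP solver consists of the one-time construction of $B$ and the per-query wrapping of $f$ through the action, both polynomial, so HSSP on $G$ is polynomial-time reducible to HSP on $G$. The one point that needs genuine care—and the place where I expect the argument to be most fragile—is the inference that an \emph{efficiently computable} base has polynomially bounded size $t$; once that bound is secured, everything downstream (composing the polynomial-time action with the oracle, and invoking Proposition~\ref{Prop:reduction} for correctness) is routine.
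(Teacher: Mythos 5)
Your proposal is correct and matches the route the paper intends: the paper gives no explicit proof, presenting this proposition as an immediate consequence of Proposition~\ref{Prop:reduction} (which supplies correctness of $f_{\rm HSP}$) combined with the efficiency hypotheses on the base construction and the action, which is precisely the accounting you carry out. Your observation that the polynomial running time of the base-constructing procedure bounds $t$ polynomially is the right way to secure the one quantitative point the paper leaves implicit.
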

When the group $G$ is a semi-direct product group,
an efficient characterization of $\mathcal H$-strong base~\cite{DISW} has been proposed as follows.
Let us assume that $G$ is isomorphic to the semi-direct product of its subgroups $K$ and $H$,
that is, $G \simeq K\rtimes_\phi H$.
The group action defined here is
\begin{equation}
\circ : G \times K \rightarrow K,~  g\circ x=yhxh^{-1},
\label{eq:Faction}
\end{equation}
where $x \in K$ and $g=yh$ for some $y \in K$ and $h \in H$.
If we consider the set $\mathcal H$ consisting of the all conjugate groups of $H$,
that is, $\mathcal H=\{gHg^{-1}|g \in G\}$,
then being an ${\cal H}$-strong base is equivalent to being an $H$-strong base.

For an efficient characterization of ${\cal H}$-strong base,
we recall the concept of {\em separation} among elements of $K$
with respect to the group action and its orbits;
for $u,v \in K$ with $u\neq v$, we say that an element $z\in K$ {\em separates} $u$ and $v$ if
\begin{equation}
v\circ z\not\in H\circ(u\circ z).
\label{eq:separation}
\end{equation}
Then the following proposition provides us with
a necessary and sufficient condition to characterize ${\cal H}$-strong base~\cite{DISW}.

\begin{Prop}
\label{Frobenius-base-lemma}
Let $B \subseteq K$. Then $B$ is an $H$-strong base
if and only if for all $u\neq v$ in $K$
there exists $z \in B$ which separates $u$ and $v$.
\end{Prop}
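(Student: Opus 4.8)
The plan is to unwind the defining identity $\bigcap_{m\in B}HG_{g\circ m}=H$ into the separation criterion through a chain of reformulations, the crux being the elimination of the universal quantifier over $g\in G$. First I would make the sets $HG_w$ explicit. Since $g\circ x=y\phi_h(x)$ for $g=yh$ with $y\in K$ and $h\in H$, the stabilizer of $w\in K$ is the conjugate $G_w=wHw^{-1}=\{w\phi_s(w)^{-1}s:s\in H\}$, so that
\begin{equation}
HG_w=\{\phi_r(w)\phi_b(w)^{-1}\,b:r,b\in H\}.
\label{eq:plan-HGw}
\end{equation}
Writing a general element of $G$ as $cb$ with $c\in K$ and $b\in H$, relation~\eqref{eq:plan-HGw} gives the membership criterion
\begin{equation}
cb\in HG_w\iff c\,\phi_b(w)\in H\circ w,
\label{eq:plan-mem}
\end{equation}
which I will use repeatedly; taking $c=e$ recovers the trivial inclusion $H\subseteq HG_w$ for every $w$.

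Next, and this is the main obstacle, I would collapse the quantifier ``for every $g\in G$'' to ``for every $y\in K$''. By the discussion preceding the proposition, being an $H$-strong base coincides with being an $\mathcal{H}$-strong base for the conjugation-closed family $\mathcal{H}=\{gHg^{-1}:g\in G\}$, and the remark following Definition~\ref{Def:strongbase-def} then rewrites this as $\bigcap_{m\in B}H'G_m=H'$ for every $H'\in\mathcal{H}$. The decisive structural input is that in this semi-direct product every conjugate of $H$ is a point stabilizer: for $g=yh$ one has $gHg^{-1}=yHy^{-1}=G_y$, so $\mathcal{H}=\{G_y:y\in K\}$. Conjugating the resulting identity $\bigcap_m G_yG_m=G_y$ by $y^{-1}$ and using $y^{-1}G_y y=H$ and $y^{-1}G_m y=G_{y^{-1}m}$ turns it into $\bigcap_m HG_{y^{-1}m}=H$; after reindexing I obtain that $B$ is an $H$-strong base if and only if $\bigcap_{m\in B}HG_{ym}=H$ for every $y\in K$. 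In other words, it suffices to test the defining condition on the elements $g=y\in K$, the point being that the conjugation argument absorbs the twist $m\mapsto\phi_h(m)$ coming from the $H$-part of a general $g$.

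Finally I would discard the $H$-component $b$ and read off separation. Applying $\phi_{b^{-1}}$ in~\eqref{eq:plan-mem} and using that $H\circ w$ is invariant under $\phi_{b^{-1}}$ (because $b^{-1}H=H$) shows that the set of admissible $c$ for a given $b$ is the $\phi_b$-image of the set for $b=e$; since $\phi_b$ fixes $e$, the condition $\bigcap_m HG_{ym}=H$ is equivalent to requiring that the only $c\in K$ with $c\,(ym)\in H\circ(ym)$ for all $m\in B$ be $c=e$. The separation statement now follows by the substitution $u=y$, $v=cy$: here $u\circ m=ym$, $v\circ m=(cy)m=c\,(ym)$, and $c\neq e$ is equivalent to $u\neq v$, so the existence of $y\in K$ and $c\neq e$ with $c\,(ym)\in H\circ(ym)$ for all $m$ is exactly the existence of a pair $u\neq v$ separated by no $z\in B$. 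Combining this with the equivalences above proves both implications simultaneously: $B$ fails to be an $H$-strong base precisely when some $u\neq v$ in $K$ is separated by no element of $B$. I expect the first and last steps to be routine verifications, with essentially all the work concentrated in the quantifier reduction of the second step, where the identification of the conjugates of $H$ with the point stabilizers is indispensable.
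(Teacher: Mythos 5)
The paper does not actually prove Proposition~\ref{Frobenius-base-lemma}; it is quoted from Decker, Ivanyos, Santha and Wocjan, so there is no in-paper argument to compare yours against. Judged on its own, your proof is correct and complete. The explicit description $HG_w=\{\phi_r(w)\phi_b(w)^{-1}b : r,b\in H\}$ and the membership criterion $cb\in HG_w\iff c\,\phi_b(w)\in H\circ w$ both check out. The quantifier reduction from ``all $g\in G$'' to ``all $y\in K$'' is legitimate because it uses only facts the paper records before the proposition ($H$-strong is equivalent to $\mathcal H$-strong for the conjugation-closed family of conjugates of $H$, and for such families the condition collapses to $\bigcap_{m\in B}H'G_m=H'$), combined with the correct identities $gHg^{-1}=yHy^{-1}=G_y$ for $g=yh$ and $y^{-1}G_my=G_{y^{-1}m}$. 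The final translation via $u=y$, $v=cy$ does put ``some pair $u\neq v$ is separated by no element of $B$'' in bijection with ``some $\bigcap_{m\in B}HG_{ym}$ contains an element $cb$ with $c\neq e$'', and discarding the $H$-component $b$ is justified since $\phi_b$ is an automorphism of $K$ fixing $e$. The only thing worth flagging is that the remark after Definition~\ref{Def:strongbase-def} which you invoke is itself stated without proof in the paper; it follows in one line from $G_{g\circ m}=gG_mg^{-1}$, so your argument is self-contained modulo that standard identity.
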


\section{Semi-direct product of cyclic groups}
\label{sec: HSSPcyclic1}
\subsection{Semi-direct product of cyclic groups and group action}

For any positive integer $M$ and $N$
and any group homomorphism $\phi$
from 
$\Z_{N}$ into 
$\Z_{M}$,
the semi-direct product group $\Z_{M}\rtimes_{\phi}\Z_{N}$ is
the set $\{(a,b): a \in \Z_{M}, b\in \Z_{N}\}$
with the group operation $(a_{1},b_{1})(a_{2},b_{2})=(a_{1}+\phi(b_{1})(a_{2}), b_{1}+b_{2})$.
For any odd prime $p$ and positive integer $n$ with $n\geq 2$,
let $G=\Z_{p^n}\rtimes_\phi\Z_p$ be the semi-direct product
with respect to a homomorphism $\phi$ from $\Z_p$ to the
automorphism group $\mathrm{Aut}\left(\Z_{p^n}\right)$ of
$\Z_{p^n}$. Because $\phi$ is a homomorphism,
we have
\begin{equation}
\phi(a)(b)=b\phi(1)(1)^a,
\label{phihomo}
\end{equation}
for any $a\in \Z_p$ and $b \in \Z_{p^n}$.
We also note that $\phi(1)(1)$ is relatively prime to $p^n$,
and thus the semi-direct product group $G$ is
completely determined by the image of $\phi(1)(1)$ in $\Z_{p^n}$.

For example, if $\phi(1)(1)=1 \in \Z_{p^n}$, then $G$ is the direct product $\Z_{p^n}\times\Z_p$.
If $\phi(1)(1)\neq 1 \pmod {p^n}$ then $p$ is the smallest positive integer
satisfying
\begin{equation}
\phi(1)(1)^p= 1\pmod {p^n},
\label{eq:order_p}
\end{equation}
that is,
$\phi(1)(1)$ is one of elements of $\Z_{p^n}^{*}$ with order $p$.
Hence, it is straightforward to verify that
$\phi(1)(1)$ is of the form
\begin{equation}
\phi(1)(1)=rp^{n-1}+1\pmod {p^n},
\label{generator}
\end{equation}
for some $r \in \{0,1,\cdots, p-1\}$~\cite{inui}.
Thus we assume that $r \neq 0$ to avoid the trivial case of the direct product $\Z_{p^n}\times\Z_p$,
and equivalently use the notions of $\phi(1)(1)$ and $rp^{n-1}+1$
for some $r \in \{1,\cdots, p-1\}$ throughout this paper.

Let us define two subgroups $K$ and $H$ of $G$ as
\begin{equation}
K=\Z_{p^n}\times \{0\},~H=\{0\}\times\Z_{p},
\label{KandH}
\end{equation}
and consider a group action $\circ : G \times K \rightarrow K$ by
\begin{eqnarray}
\left(y,h \right)\circ \left(x,0 \right)
&=&\left(y,h \right)\left(x,0 \right)\left(0,-h \right)\nonumber\\
&=&\left(y+\phi(h)(x), 0 \right),
\label{action}
\end{eqnarray}
for any  $\left(y, h\right) \in G$ and $\left(x, 0\right) \in K$
(or equivalently, for any $x, y\in \Z_{p^n}$ and $h \in \Z_p$.)

It is clear that
the identity element $\left(0,0\right)$ in $K$ is fixed by any element of $H$ under the action $\circ$.
Furthermore, the following theorem completely characterizes the elements of $K$ that are fixed by $H$.
\begin{Thm}\label{stab_H}
For any $\left(y,0\right) \in K$ and $\left(0,h\right) \in H$ satisfying $h\neq 0 \pmod {p}$,
$\left(y,0\right)$ is fixed by $\left(0,h\right)$ under the group action $\circ$
if and only if $y$ is not relatively prime to $p^n$.
\end{Thm}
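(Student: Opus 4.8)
The plan is to recast ``fixed by $(0,h)$'' as an explicit congruence in $\Z_{p^n}$ and then resolve it through a binomial expansion. First I would write the action out: for $(0,h)\in H$ and $(y,0)\in K$, equation~(\ref{action}) gives $(0,h)\circ(y,0)=(\phi(h)(y),0)$, so $(y,0)$ is fixed by $(0,h)$ exactly when $\phi(h)(y)\equiv y\pmod{p^n}$. Applying the homomorphic identity~(\ref{phihomo}), this becomes $y\,\phi(1)(1)^h\equiv y\pmod{p^n}$, i.e.\ $y\bigl(\phi(1)(1)^h-1\bigr)\equiv 0\pmod{p^n}$.

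Next I would substitute the explicit form $\phi(1)(1)=rp^{n-1}+1$ from~(\ref{generator}) and expand $(rp^{n-1}+1)^h$ by the binomial theorem. The crucial point is that each term containing $(rp^{n-1})^k$ with $k\geq 2$ carries a factor $p^{k(n-1)}$, and since $n\geq 2$ we have $k(n-1)\geq 2(n-1)\geq n$; hence every such term vanishes modulo $p^n$. The expansion therefore collapses to $\phi(1)(1)^h\equiv 1+hrp^{n-1}\pmod{p^n}$, and the fixing condition reduces to $y\,hr\,p^{n-1}\equiv 0\pmod{p^n}$.

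Finally I would use the hypotheses on $h$ and $r$. Since $h\neq 0\pmod p$ and $r\in\{1,\ldots,p-1\}$, both are coprime to $p$, so $hr$ is a unit in $\Z_{p^n}$; multiplying by its inverse shows the condition is equivalent to $y\,p^{n-1}\equiv 0\pmod{p^n}$, that is, $p\mid y$. Because $p^n$ is a prime power, $p\mid y$ holds precisely when $\gcd(y,p^n)\neq 1$, i.e.\ when $y$ is not relatively prime to $p^n$, which is the desired equivalence.

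I do not expect a genuine obstacle; the one delicate step is confirming that the higher-order binomial terms really drop out, which relies on $n\geq 2$ and would fail for $n=1$. I would also be careful to record that $r\neq 0$---guaranteed by the standing assumption that $G$ is not the direct product $\Z_{p^n}\times\Z_p$---so that $hr$ is genuinely invertible modulo $p^n$.
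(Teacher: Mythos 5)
Your proposal is correct and follows essentially the same route as the paper: both reduce the fixing condition to $y\left(\phi(1)(1)^h-1\right)\equiv 0\pmod{p^n}$ and exploit the binomial collapse of $\left(rp^{n-1}+1\right)^h$ modulo $p^n$ (which indeed requires $n\geq 2$). The only cosmetic difference is that you run a single chain of equivalences using $\phi(1)(1)^h\equiv 1+hrp^{n-1}$ and the invertibility of $hr$, whereas the paper handles the ``only if'' direction by appealing to $p$ being the exact multiplicative order of $\phi(1)(1)$; both are sound.
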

\begin{proof}
From the definition of group action in Eq.~(\ref{action}), we have
\begin{eqnarray}
\left(0,h\right)\circ\left(y,0\right)
&=&\left(\phi(h)(y),0\right)\nonumber\\
&=&\left(y\phi(1)(1)^h,0\right).
\label{Hfixy}
\end{eqnarray}
If $y$ is not relatively prime to $p^n$,
then we can assume  $y=sp^j$ for some $1\leq j \leq n-1$ and $0\leq s \leq p-1$.
Together with Eq.~(\ref{generator}), we have
\begin{eqnarray}
y\phi(1)(1)^h
&=&sp^j\left(rp^{n-1}+1\right)^h \pmod {p^n}\nonumber\\
&=&sp^j\sum_{i=0}^{h}\binom{h}{i}\left(rp^{n-1}\right)^i \pmod {p^n}\nonumber\\
&=&sp^j\pmod {p^n}\nonumber\\
&=&y\pmod {p^n},
\label{phi11bi}
\end{eqnarray}
and thus $\left(0,h\right)\circ\left(y,0\right)=\left(y,0\right)$ for any $\left(0,h\right) \in H$ if $y$
is not relatively prime to $p^n$.

Conversely, let us suppose that $\left(0,h\right)$ fixes $\left(y,0\right)$
with $y$ being relatively prime to $p^n$.
From Eq.~(\ref{Hfixy}), we have
\begin{equation}
y\phi(1)(1)^h=y \pmod {p^n},
\label{Hfixy2}
\end{equation}
or equivalently, $y\left(\phi(1)(1)^h-1\right)$ is divided by $p^n$.
Because $y$ is relatively prime to $p^n$,
Eq.~(\ref{Hfixy2}) is true if and only if
\begin{equation}
\phi(1)(1)^h=1 \pmod {p^n}.
\label{Hfixy3}
\end{equation}
In other words, $\left(0,h\right)$ fixes $\left(y,0\right)$ with respect to the action $\circ$
if and only if Eq.~(\ref{Hfixy3}) holds.
However, this contradicts to the fact that $p$ is the smallest positive integer satisfying Eq.~(\ref{eq:order_p})
since $1 \leq h \leq p-1$.
Thus for any $\left(0,h\right) \in H$ and $\left(y,0\right) \in K$ such that $y$ is relatively prime to $p^n$,
$\left(0,h\right)$ does not fix $\left(y,0\right) \in K$.
\end{proof}

Let us define the subset $P_0\times \{0\}$ of $K$
where $P_0=\{pk| 0 \leq k \leq p^{n-1}-1\}$ consists of the elements in $\Z_{p^n}$,
which are not relatively prime to $p^n$.
Then Theorem~\ref{stab_H} implies that $H$ is the stabilizer subgroup of $G$
that fixes every element in $P_0\times \{0\}$.
For this reason, we also denote $H=H_{P_0 \times\{0\}}$.
Theorem~\ref{stab_H} also implies that
the semi-direct product of cyclic groups $G=\Z_{p^n}\rtimes_\phi\Z_p$
for general $p$ and $n$ under the action in Eq.~(\ref{action}) is not a Frobenius group
because, not only the identity element $\left(0,0\right)$,
every element in $H$ has more than one fixed element.

The following theorem shows that the action of $H$ on any element of $K$ that is not in $P_0\times \{0\}$
is {\em faithful}, that is,
for any $\left(y,0\right) \in K$ such that $y$ is relatively prime to $p^n$,
two different elements of $H$ lead $\left(y,0\right)$ to different elements in $K$ under the action $\circ$.
\begin{Thm}
For $\left(y,0\right) \in K$ such that $y$ is relatively prime to $p^n$,
$\left(0,h\right)\circ \left(y,0\right)\neq \left(0,h'\right)\circ \left(y,0\right)$
for any $h, h' \in \Z_p$ such that $h \neq h' \pmod {p}$.
\label{actdiff}
\end{Thm}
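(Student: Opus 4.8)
The plan is to reduce the claimed inequality to a statement about the multiplicative order of $\phi(1)(1)$, exactly mirroring the argument already used in the proof of Theorem~\ref{stab_H}. First I would evaluate both sides of the desired inequality using Eq.~(\ref{Hfixy}): applied to the single point $(y,0)$ it gives $(0,h)\circ(y,0)=(y\phi(1)(1)^h,0)$ and likewise $(0,h')\circ(y,0)=(y\phi(1)(1)^{h'},0)$. Hence the two images coincide precisely when $y\phi(1)(1)^h=y\phi(1)(1)^{h'}\pmod{p^n}$, and the theorem is equivalent to the assertion that this congruence must fail whenever $h\neq h'\pmod p$.

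Next I would argue by contradiction. Suppose the two images were equal. Since $y$ is relatively prime to $p^n$, it is a unit modulo $p^n$, so I may cancel $y$ to obtain $\phi(1)(1)^h=\phi(1)(1)^{h'}\pmod{p^n}$. Because $\phi(1)(1)$ is itself relatively prime to $p^n$ (as noted just after Eq.~(\ref{phihomo})), it too is invertible modulo $p^n$; assuming without loss of generality that $h>h'$, I could then multiply by the inverse of $\phi(1)(1)^{h'}$ to deduce $\phi(1)(1)^{h-h'}=1\pmod{p^n}$.

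Finally I would invoke the order condition. Since $h,h'\in\{0,\ldots,p-1\}$ and $h\neq h'$, the exponent satisfies $1\le h-h'\le p-1$, so it is a positive integer strictly smaller than $p$ that nonetheless trivializes $\phi(1)(1)$. This contradicts Eq.~(\ref{eq:order_p}), which asserts that $p$ is the smallest such positive integer. Therefore the two images must be distinct, which proves the theorem. I do not anticipate any genuine obstacle here: the whole content of the argument lies in cancelling the unit $y$, which cleanly separates the role of the base point from the group element, after which the faithfulness follows immediately from the minimality of the order of $\phi(1)(1)$ already established earlier in Section~\ref{sec: HSSPcyclic1}.
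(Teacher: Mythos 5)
Your proposal is correct and follows essentially the same route as the paper's own proof: compute both images via Eq.~(\ref{Hfixy}), cancel the unit $y$ modulo $p^n$, and derive $\phi(1)(1)^{h-h'}=1\pmod{p^n}$ with $0<h-h'<p$, contradicting the minimality of the order $p$ in Eq.~(\ref{eq:order_p}). No gaps.
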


\begin{proof}
Suppose $\left(0,h\right)\circ \left(y,0\right)=\left(0,h'\right)\circ \left(y,0\right)$.
Because $y$ is relatively prime to $p^n$, let $y=pk+t$ for some $t \in \{1,2,\cdots,p-1\}$, then
\begin{eqnarray}
\left(0,h\right)\circ \left(y,0\right)&=&\left(\phi(h)(y),0\right)=\left(y\phi(1)(1)^h,0\right),\nonumber\\
\left(0,h'\right)\circ \left(y,0\right)&=&\left(\phi(h')(y),0\right)=\left(y\phi(1)(1)^{h'},0\right).
\end{eqnarray}
By the assumption,
we have $y\phi(1)(1)^h=y\phi(1)(1)^{h'}\pmod {p^n}$,
which is equivalent to
\begin{equation}
y\left(\phi(1)(1)^h-\phi(1)(1)^{h'}\right)=0\pmod {p^n}.
\label{eq: actdiff2}
\end{equation}
Since $y$ is not a zero divisor in $\Z_{p^n}$, we have
\begin{equation}
\phi(1)(1)^h=\phi(1)(1)^{h'}\pmod {p^n}.
\label{eq: actdiff3}
\end{equation}
However Eq.~(\ref{eq: actdiff3}) implies $\phi(1)(1)^{h-h'}=1\pmod {p^n}$,
which contradicts to the fact in Eq.~(\ref{eq:order_p})
stating that $p$ is the smallest integer
satisfying $\phi(1)(1)^{p}=1\pmod {p^n}$
because $0<h-h'<p$ (without loss of generality, we may assume $h>h'$).
Thus $\left(0,h\right)\circ \left(y,0\right)\neq\left(0,h'\right)\circ \left(y,0\right)$.
\end{proof}

From Theorem~\ref{stab_H} together with Theorem~\ref{actdiff},
we note that the orbits of $H$ are singleton subsets $\{\left(pk, 0\right)\}$ of $P_0\times \{0\}$
and some subsets of $K$, each consisting of $|H|$ number of elements.
The theorems also implies that $H$ is a closed subgroup and its orbits form a closed partition of $K$.
In the following subsection,
we will consider the general form of closed subgroups of $G=\Z_{p^n}\rtimes_\phi\Z_p$
and their orbits in accordance of $H$.

\subsection{Stabilizer Subgroups}
\label{Stabilizer Subgroups}
In this section,
we consider stabilizer subgroups of each element in $K$
with respect to the group action in Eq.~(\ref{action}).
Let us first consider a partition of $K$; for each $t \in \{0,1,\cdots, p-1\}$,
we define $P_t$ to be the set of elements in $\Z_{p^n}$
whose remainder is $t$ when divided by $p$,
that is, $P_t=\{pk+t| 0 \leq k \leq p^{n-1}-1\}$.
It is clear that $K$ can be partitioned into subsets $P_t\times \{0\}$.

\begin{Thm}
For any $\left(x,0\right) \in P_t\times \{0\}$ with $t \in \{0,1,\cdots, p-1\}$,
\begin{equation}
\left(x,0\right)H\left(-x,0\right)=\left(t,0\right)H\left(-t,0\right)
\label{H_t}
\end{equation}
where $\left(x,0\right)H\left(-x,0\right)=
\{\left(x,0\right)\left(0,h\right)\left(-x,0\right)| \left(0,h\right)\in H\}$
is the conjugate subgroup of $H$ in $G$.
\label{H_P_t}
\end{Thm}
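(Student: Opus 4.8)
The plan is to compute the conjugate element $(x,0)(0,h)(-x,0)$ directly from the group operation and then show that its first coordinate depends on $x$ only through its residue $t = x \bmod p$. First I would multiply out using $(a_1,b_1)(a_2,b_2)=(a_1+\phi(b_1)(a_2),b_1+b_2)$: since $(x,0)(0,h)=(x,h)$ and then $(x,h)(-x,0)=(x+\phi(h)(-x),h)$, and since the homomorphic identity of Eq.~(\ref{phihomo}) gives $\phi(h)(-x)=-x\phi(1)(1)^h$, I obtain
\begin{equation}
(x,0)(0,h)(-x,0)=\left(x\left(1-\phi(1)(1)^h\right),h\right).
\end{equation}
Note that $(-x,0)$ is indeed the inverse of $(x,0)$ in $G$, so this is a genuine conjugate subgroup.

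Next I would evaluate $1-\phi(1)(1)^h$ modulo $p^n$ using the binomial expansion of $\phi(1)(1)^h=(rp^{n-1}+1)^h$ from Eq.~(\ref{generator}). Because every term $\binom{h}{i}(rp^{n-1})^i$ with $i\geq 2$ carries a factor $p^{2(n-1)}$, and $2(n-1)\geq n$ when $n\geq 2$, all such terms vanish modulo $p^n$, leaving $\phi(1)(1)^h\equiv 1+hrp^{n-1}\pmod{p^n}$. Hence the first coordinate reduces to $x\left(1-\phi(1)(1)^h\right)\equiv -xhrp^{n-1}\pmod{p^n}$.

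The key observation, and the step that actually makes the theorem true, is that the surviving term is multiplied by $p^{n-1}$. Writing $x=pk+t$ for $(x,0)\in P_t\times\{0\}$ gives $xp^{n-1}=kp^n+tp^{n-1}\equiv tp^{n-1}\pmod{p^n}$, so the dependence on $x$ collapses to its residue $t$:
\begin{equation}
(x,0)(0,h)(-x,0)=\left(-thrp^{n-1},h\right)\pmod{p^n}.
\end{equation}
Carrying out the same computation with $t$ in place of $x$ yields exactly the same element, since $tp^{n-1}$ is already reduced. As $h$ ranges over $\Z_p$ the two conjugations therefore produce, for each individual $h$, the identical element, and so the subgroups $(x,0)H(-x,0)$ and $(t,0)H(-t,0)$ coincide.

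I expect no serious obstacle here; the only point demanding care is the vanishing of the higher-order binomial terms, which relies on the hypothesis $n\geq 2$, together with the elementary bookkeeping that $xp^{n-1}\equiv tp^{n-1}\pmod{p^n}$ needs only $x\equiv t\pmod p$.
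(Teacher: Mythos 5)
Your proof is correct and follows essentially the same route as the paper: both compute the conjugate $(x,0)(0,h)(-x,0)$ explicitly and use the binomial expansion of $\phi(1)(1)^h=(rp^{n-1}+1)^h$ modulo $p^n$ (valid since $n\geq 2$) to show that the first coordinate depends on $x$ only through $t=x\bmod p$, giving element-by-element equality of the two conjugate subgroups. The only cosmetic difference is that the paper splits $x=pk+t$ and shows $\phi(h)(pk)\equiv pk$, whereas you expand fully to $-xhrp^{n-1}$ and then reduce $xp^{n-1}\equiv tp^{n-1}$; these are the same computation.
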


\begin{proof}
Because $x\in P_t$, $x=pk+t$ for some $0 \leq k \leq p^{n-1}-1$,
\begin{eqnarray}
x-\phi(h)(x)
&=& pk+t-\phi(h)\left(pk+t\right)\nonumber\\
&=& pk+t-\phi(h)\left(pk\right)-\phi(h)\left(t\right)\nonumber\\
&=& t-\phi(h)(t),
\label{eq: H_t2}
\end{eqnarray}
where the last equality is due to
\begin{eqnarray}
\phi(h)\left(pk\right)
&=&pk\phi(1)(1)^h \pmod {p^n}\nonumber\\
&=&pk\left(rp^{n-1}+1\right)^h \pmod {p^n}\nonumber\\
&=&pk\sum_{i=0}^{h}\binom{h}{i}\left(rp^{n-1}\right)^i \pmod {p^n}\nonumber\\
&=&pk \pmod {p^n}.
\label{eq: H_t3}
\end{eqnarray}

Now for any $\left(0,h\right)\in H$, we have
\begin{eqnarray}
\left(x,0\right)\left(0,h\right)\left(-x,0\right)
&=&\left(x-\phi(h)(x),0 \right)\nonumber\\
&=&\left(t-\phi(h)(t),0\right)\nonumber\\
&=&\left(t,0\right)\left(0,h\right)\left(-t,0\right),
\label{eq: H_t}
\end{eqnarray}
which completes the proof.
\end{proof}

Now we have the following theorem,
which completely characterizes the stabilizer subgroups of each element in $K$.
\begin{Thm}
For any $\left(y,0\right) \in K$ such that $y\in P_t$,
$\left(y,0\right)$ is fixed by $\left(x,h\right) \in G$ under the group action
$\circ$ if and only if  $\left(x,h\right) \in \left(t,0\right)H\left(-t,0\right)$ .
\label{H_tfix}
\end{Thm}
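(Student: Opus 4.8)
The plan is to identify the stabilizer $G_{(y,0)}$ as a conjugate of $H$ and then invoke Theorem~\ref{H_P_t} to put that conjugate into the normalized form $(t,0)H(-t,0)$. First I would compute the stabilizer of the identity element $(0,0)$ of $K$: from Eq.~(\ref{action}) we have $(x,h)\circ(0,0)=(x,0)$, so $(x,h)$ fixes $(0,0)$ exactly when $x=0$, and hence $G_{(0,0)}=\{(0,h):h\in\Z_p\}=H$.

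Next I would use the orbit--stabilizer relationship: whenever $g\in G$ satisfies $g\circ(0,0)=(y,0)$, the stabilizers are conjugate, $G_{(y,0)}=gG_{(0,0)}g^{-1}=gHg^{-1}$. Since $(y,0)\circ(0,0)=(y,0)$ by Eq.~(\ref{action}), I may take $g=(y,0)$, whose inverse in the semi-direct product is $(-y,0)$. This gives $G_{(y,0)}=(y,0)H(-y,0)$. Finally, because $y\in P_t$, Theorem~\ref{H_P_t} yields $(y,0)H(-y,0)=(t,0)H(-t,0)$, so that $G_{(y,0)}=(t,0)H(-t,0)$. This equality of subgroups is precisely the claimed equivalence: $(x,h)$ fixes $(y,0)$ if and only if $(x,h)\in(t,0)H(-t,0)$.

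The one step I would verify carefully, rather than merely quote, is the conjugation identity $G_{(y,0)}=gG_{(0,0)}g^{-1}$; it follows directly from the action axioms by checking that $k\circ(y,0)=(y,0)$ is equivalent to $(g^{-1}kg)\circ(0,0)=(0,0)$. I expect no genuine obstacle here, because the underlying arithmetic has already been settled: the inverse formula is immediate, and the binomial cancellation $(rp^{n-1}+1)^h\equiv 1+hrp^{n-1}\pmod{p^n}$ that drives Theorem~\ref{H_P_t} was established in the proofs above. As an independent cross-check one could compute both sides explicitly, obtaining $G_{(y,0)}=\{(y(1-\phi(1)(1)^h),h):h\in\Z_p\}$ and $(t,0)H(-t,0)=\{(t(1-\phi(1)(1)^h),h):h\in\Z_p\}$, and then note that $(y-t)\left(1-\phi(1)(1)^h\right)\equiv 0\pmod{p^n}$ since $p\mid(y-t)$ and $p^{n-1}\mid\left(1-\phi(1)(1)^h\right)$.
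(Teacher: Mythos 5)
Your proof is correct, but it takes a genuinely different route from the paper's. The paper proves both directions by direct computation: it expands $(t,0)(0,h)(-t,0)\circ(y,0)$ using the binomial identity $(rp^{n-1}+1)^h\equiv 1+hrp^{n-1}\pmod{p^n}$ to show every element of $(t,0)H(-t,0)$ fixes $(y,0)$, and then, assuming $(x,h)\circ(y,0)=(y,0)$, solves explicitly for $x=t-\phi(h)(t)$ to get the converse. You instead invoke the general group-action principle that stabilizers of points in the same orbit are conjugate, $G_{g\circ m}=gG_mg^{-1}$, combined with the easy computations $G_{(0,0)}=H$ and $(y,0)\circ(0,0)=(y,0)$, to conclude $G_{(y,0)}=(y,0)H(-y,0)$ in one stroke; Theorem~\ref{H_P_t} then normalizes the conjugator to $(t,0)$. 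Both arguments are sound (your inverse formula $(y,0)^{-1}=(-y,0)$ and the conjugation identity check out, and your closing cross-check correctly reduces to $p\mid(y-t)$ and $p^{n-1}\mid\bigl(1-\phi(1)(1)^h\bigr)$). What your approach buys is conceptual economy: it makes transparent that the theorem is forced by the transitivity of the translation action of $K$ on itself plus Theorem~\ref{H_P_t}, and it handles both directions of the equivalence simultaneously as an equality of subgroups. What the paper's computation buys is self-containedness — in particular its converse direction derives the exact form $x=t-\phi(h)(t)$ of the stabilizing elements, a formula that is reused later (e.g.\ in Eq.~(\ref{conj1}) and Corollary~\ref{H_tfix2}), whereas your argument leaves that explicit description to the optional cross-check.
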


\begin{proof}
Because $y\in P_t$, let $y=pk+t$ for some $k \in \{0,1,\cdots,p^{n-1}-1\}$,
then for any $\left(0,h\right) \in H$ we have,
\begin{eqnarray}
\left(t,0\right)\left(0,h\right)\left(-t,0\right)\circ\left(y,0\right)
&=&\left(t-\phi(h)(t),h \right)\circ\left(y,0\right)\nonumber\\
&=&\left(t-\phi(h)(t)+\phi(h)(y),0 \right),
\label{eq:tht}
\end{eqnarray}
with
\begin{eqnarray}
t-\phi(h)(t)+\phi(h)(y)
&=&t-\phi(h)(t)+\phi(h)(pk+t)\pmod {p^n}\nonumber\\
&=&t-\phi(h)(t)+\phi(h)(pk)+\phi(h)(t)\pmod {p^n}\nonumber\\
&=&t+\phi(h)(pk)\pmod {p^n}\nonumber\\
&=&t+pk\phi(1)(1)^h\pmod {p^n}\nonumber\\
&=&t+pk\left(rp^{n-1}+1\right)^h\pmod {p^n}\nonumber\\
&=&t+pk\pmod {p^n}
\label{eq:tht2}
\end{eqnarray}
where the last equality is 
due to the binomial expansion of $\left(rp^{n-1}+1\right)^h$ under modulo $p^n$.
Now we have
\begin{eqnarray}
\left(t,0\right)\left(0,h\right)\left(-t,0\right)\circ\left(y,0\right)
&=&\left(t-\phi(h)(t)+\phi(h)(y),0 \right)\nonumber\\
&=&\left(pk+t,0\right)\nonumber\\
&=&\left(y,0\right),
\label{eq:tht3}
\end{eqnarray}
which implies that any element in $\left(t,0\right)H\left(-t,0\right)$ fixes $\left(y,0\right)$.

Conversely, suppose that there exists $\left(x,h\right) \in G$
which fixes $\left(y,0\right)$ under the action $\circ$, that is
\begin{equation}
\left(x,h\right)\circ \left(y,0\right)=\left(y,0\right),
\label{eq:tht4}
\end{equation}
where
\begin{eqnarray}
\left(x,h\right)\circ \left(y,0\right)
&=&\left(x+\phi(h)(y),0\right)\nonumber\\
&=&\left(x+\phi(h)(pk+t),0\right)\nonumber\\
&=&\left(x+\phi(h)(pk)+\phi(h)(t),0\right).
\label{eq:tht5}
\end{eqnarray}
From Eq.~(\ref{eq:tht4}), we have
\begin{eqnarray}
y&=&x+\phi(h)(pk)+\phi(h)(t)\pmod {p^n}\nonumber\\
&=&x+pk\left(\phi(1)(1)\right)^h+\phi(h)(t)\pmod {p^n}\nonumber\\
&=&x+pk+\phi(h)(t)\pmod {p^n}
\label{eq:tht6}
\end{eqnarray}
where $y=pk+t$.
Thus $x=t-\phi(h)(t)\pmod {p^n}$, or equivalently
\begin{eqnarray}
\left(x,h\right)
&=&\left(t-\phi(h)(t),h\right)\nonumber\\
&=&\left(t,0\right)\left(0,h\right)\left(-t,0\right)
\in \left(t,0\right)H\left(-t,0\right),
\label{eq:tht7}
\end{eqnarray}
which completes the proof.
\end{proof}

From Theorem~\ref{H_P_t} and Theorem~\ref{H_tfix},
we note that, for each $t \in \{0,1,\cdots, p-1\}$,
the conjugate group $\left(t,0\right)H\left(-t,0\right)$ of $H$
is the stabilizer group of each elements in $P_t\times \{0\}$
with respect to the group action $\circ$.
Similarly with Theorem~\ref{actdiff},
it is also straightforward to verify
that $\left(t,0\right)H\left(-t,0\right)$ acts faithfully on any element of $K$ that is not in $P_t\times \{0\}$.
They are closed subgroups of $G$ and their orbits form closed partitions.
We will denote $\mathcal H$ the set of all conjugate subgroups of $H$ in $G$;
\begin{equation}
\mathcal H=\{\left(t,0\right)H\left(-t,0\right)| 0\leq t \leq p-1 \}.
\label{calH}
\end{equation}

\section{Quantum Algorithm for HSSP on $\Z_{p^n}\rtimes_\phi\Z_p$}
\label{sec: HSSPcyclic2}
In this section,
we present an efficient quantum algorithm for HSSP defined on $G=\Z_{p^n}\rtimes_\phi\Z_p$
with respect to the group action in Eq.~(\ref{action})
and the set of closed subgroups $\mathcal H$ in Eq.~(\ref{calH}).
By considering an efficient reduction scheme of HSSP defined on $G$ onto its related HSP,
we show that there exists a quantum algorithm solving HSSP on $\Z_{p^n}\rtimes_\phi\Z_p$
in a polynomial time with respect to the size of the group.

From Propositions~\ref{Prop:reduction} and \ref{prop:reduction},
we note that for a given set $G$ with a set of closed subsets $\mathcal H$,
there exists a polynomial-time reduction scheme from HSSP to HSP
if we can efficiently construct an $\mathcal H$-strong base of small size.
For the case when $G$ is a semi-direct product group,
Proposition~\ref{Frobenius-base-lemma} provides us
with an efficient way to convince the existence of an $\mathcal H$-strong base.
Furthermore, if the group is a semi-direct product of cyclic groups, $G=\Z_{p^n}\rtimes_\phi\Z_p$,
the following theorem gives a lower bound of the probability
that element in $K$ separates given two distinct elements of $K$
with respect to the action in Eq.~(\ref{action}).

\begin{Thm}
For given $\left(u,0\right)$ and $\left(v,0\right)$ in $K$ with $u\neq v \pmod {p^n}$
and a randomly chosen element $\left(z,0\right)$ from $K$,
the probability that $\left(z,0\right)$ separates $\left(u,0\right)$ and $\left(v,0\right)$
is no less than $1-\frac{\left(p-1\right)^2}{p\left(p^n-1 \right)}$.
\label{thm:uvsepa}
\end{Thm}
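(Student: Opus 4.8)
The plan is to reduce the separation condition in Eq.~(\ref{eq:separation}) to an elementary congruence and then to an orbit-membership condition that can be counted directly. First I would use that the $\Z_p$-components of $(u,0)$ and $(v,0)$ vanish, so that $(u,0)\circ(z,0)=(u+z,0)$ and $(v,0)\circ(z,0)=(v+z,0)$. Hence $(z,0)$ \emph{fails} to separate $(u,0)$ and $(v,0)$ precisely when $(v+z,0)\in H\circ(u+z,0)$, i.e.\ when there is some $h$ with $v+z\equiv(u+z)\phi(1)(1)^h\pmod{p^n}$. Invoking the binomial identity $\phi(1)(1)^h\equiv 1+hrp^{n-1}\pmod{p^n}$ already used in the proof of Theorem~\ref{stab_H}, this failure condition becomes $(u+z)\,hrp^{n-1}\equiv v-u\pmod{p^n}$ for some $h\in\{1,\dots,p-1\}$; the value $h=0$ is excluded because $u\neq v$.

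Next I would pin down the orbit $H\circ(u+z,0)$ exactly, using the structural theorems already established. If $p\mid(u+z)$, then Theorem~\ref{stab_H} shows that $H$ fixes $(u+z,0)$, so the orbit is the singleton $\{(u+z,0)\}$ and a collision would force $v=u$; thus every such $z$ separates. If $p\nmid(u+z)$, then Theorem~\ref{actdiff} gives a faithful $H$-action, so the orbit has exactly $p$ elements, and from $\phi(1)(1)^h\equiv 1+hrp^{n-1}$ these elements are $(u+z)+kp^{n-1}$ for $k\in\Z_p$, i.e.\ exactly the residue class of $u+z$ modulo $p^{n-1}$. Consequently $(v+z,0)$ lies in this orbit if and only if $v+z\equiv u+z\pmod{p^{n-1}}$, that is, if and only if $v\equiv u\pmod{p^{n-1}}$. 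Combining the two cases, $(z,0)$ fails to separate $(u,0)$ and $(v,0)$ exactly when $p\nmid(u+z)$ \emph{and} $p^{n-1}\mid(v-u)$.

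Finally I would turn this characterisation into the stated probability. For fixed $u$ and $v$ the failure event depends on the random $z$ only through the condition $p\nmid(u+z)$, which holds for a fraction $\tfrac{p-1}{p}$ of the $z\in K$, while it is gated by the $z$-independent condition $p^{n-1}\mid(v-u)$. The quoted constant $\tfrac{(p-1)^2}{p(p^n-1)}$ then arises from accounting for how many differences trigger this gate: exactly $p-1$ of the $p^n-1$ admissible values of $v-u$ are divisible by $p^{n-1}$, so weighting the failure fraction $\tfrac{p-1}{p}$ by $\tfrac{p-1}{p^n-1}$ gives $\tfrac{(p-1)^2}{p(p^n-1)}$ as the non-separation probability, and the complementary quantity as the separation probability. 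I expect the crux to be precisely this last bookkeeping step: the per-$z$ failure fraction is sharply two-valued in $v$, equal to $\tfrac{p-1}{p}$ when $p^{n-1}\mid(v-u)$ and to $0$ otherwise, so the quoted figure is the value obtained after averaging over the choice of $v$. The delicate part of the write-up is therefore to carry out the orbit computation for $p\nmid(u+z)$ cleanly (identifying the orbit with a residue class modulo $p^{n-1}$) and then to count the differences $v-u$ caught by the divisibility gate, since it is this count that produces the factor $\tfrac{p-1}{p^n-1}$ and hence the denominator $p^n-1$.
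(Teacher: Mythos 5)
Your proposal is correct and takes essentially the same route as the paper's proof: both reduce non-separation via the binomial identity $\phi(1)(1)^h\equiv 1+hrp^{n-1}\pmod{p^n}$ to the congruence $v-u\equiv(u+z)hrp^{n-1}\pmod{p^n}$, split on whether $p^{n-1}$ divides $v-u$, count $(p-1)p^{n-1}$ failing values of $z$ in that case (your condition $p\nmid(u+z)$ matches the paper's $z\not\equiv -u\pmod p$ from solving for $z$), and obtain the stated bound by averaging over the pair $(u,v)$ with weight $\frac{p-1}{p^n-1}$ --- which, as you rightly flag, is how the quoted constant must be read despite the ``for given $(u,0)$ and $(v,0)$'' phrasing. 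Your identification of the $H$-orbit of $(u+z,0)$ with the residue class of $u+z$ modulo $p^{n-1}$ is only a cosmetic repackaging of the paper's direct congruence-solving, yielding the identical counts and bookkeeping.
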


\begin{proof}
Let us suppose that $\left(z,0\right)$ does not separate $\left(u,0\right)$ and $\left(v,0\right)$.
From the definition of separation in (\ref{eq:separation}),
we have $\left(v,0\right)\circ\left(z,0\right) \in H\circ[\left(u,0\right)\circ\left(z,0\right)]$.
In other words, there exists an element $\left(0,h\right)$ in $H$ such that
\begin{equation}
\left(v,0\right)\circ\left(z,0\right)=\left(0,h\right)\circ[\left(u,0\right)\circ\left(z,0\right)],
\label{eq:notsep}
\end{equation}
which is equivalent to $\left(v+z,0\right)=\left(\phi(h)(u+z),0\right)$
by the definition of group action in Eq.~(\ref{action}).
Thus $\left(z,0\right)$ does not separate $\left(u,0\right)$ and $\left(v,0\right)$ if and only if
there exists $\left(0,h\right) \in H$ (or equivalently there exists $h \in \Z_p$) such that
\begin{equation}
v+z=\phi(h)(u+z)\pmod {p^n}.
\label{eq:notsep2}
\end{equation}
Because $\phi(h)$ is a homomorphism, Eq.~(\ref{eq:notsep2}) is also equivalent to
\begin{equation}
v-\phi(h)(u)=\phi(h)(z)-z\pmod {p^n}.
\label{eq:notsep3}
\end{equation}

Now we note that the right-hand side of Eq.~(\ref{eq:notsep3}) becomes
\begin{eqnarray}
\phi(h)(z)-z
&=&z\left(\phi(1)(1)^h-1\right)\pmod {p^n}\nonumber\\
&=&z[\left(rp^{n-1}+1\right)^h-1]\pmod {p^n}\nonumber\\
&=&zhrp^{n-1}\pmod{p^n},
\label{eq:notsep4}
\end{eqnarray}
where the last equality is by the binomial expansion of $\phi(1)(1)^h=\left(rp^{n-1}+1\right)^h$.
Similarly, the left-hand side of Eq.~(\ref{eq:notsep3}) can also be expressed as
\begin{eqnarray}
v-\phi(h)(u)
&=&v-u\phi(1)(1)^h\pmod {p^n}\nonumber\\
&=&v-u\left(rp^{n-1}+1\right)^h\pmod {p^n}\nonumber\\
&=&v-u-uhrp^{n-1}\pmod{p^n}.
\label{eq:notsep5}
\end{eqnarray}

From Eq.~(\ref{eq:notsep3}) together with Eq.~(\ref{eq:notsep5}) and Eq.~(\ref{eq:notsep4}),
we note that
$\left(z,0\right)$ does not separate $\left(u,0\right)$ and $\left(v,0\right)$ if and only if
there exists $\left(0,h\right) \in H$ such that
\begin{equation}
v-u=(z+u)hrp^{n-1}\pmod {p^n}.
\label{eq:notsep6}
\end{equation}

Case 1:  Let us first consider the cases when $v-u$ is not divisible by $p^{n-1}$, that is
\begin{equation}
v-u=cp^{n-1}+d \pmod {p^n},
\label{vnotu}
\end{equation}
for some $c \in \{0,1,\cdots,p-1\}$ and $d \in \{1,\cdots,p-1\}$.
For this case, it is readily seen that
Eq.~(\ref{eq:notsep6}) never holds because
\begin{eqnarray}
v-u-(z+u)hrp^{n-1}
&=&cp^{n-1}+d-(z+u)hrp^{n-1} \pmod {p^n}\nonumber\\
&=&[c-(z+u)hr]p^{n-1}+d \pmod {p^n}\nonumber\\
&\neq&0\pmod {p^n},
\label{vnotu2}
\end{eqnarray}
for any $\left(0,h\right) \in H$,
and thus every $\left(z,0\right)$ in $K$ separates $\left(u,0\right)$ and $\left(v,0\right)$.

Case 2: Now let us consider the cases when $v-u$ is divisible by $p^{n-1}$, that is,
\begin{equation}
v-u=cp^{n-1}\pmod {p^n},
\label{vnotu}
\end{equation}
for some $c \in \{1,\cdots,p-1\}$ (because $u \neq v$, $c\neq0$).
For this case, Eq.~(\ref{eq:notsep6}) becomes
\begin{equation}
cp^{n-1}=(z+u)hrp^{n-1}\pmod {p^n},
\label{eq:notsep7}
\end{equation}
which is equivalent to
\begin{equation}
zp^{n-1}=(ch^{-1}r^{-1}-u)p^{n-1}\pmod {p^n},
\label{eq:notsep8}
\end{equation}
for some $h \in \Z_p$.

Here we note that $h \in \{1, 2,\cdots p-1\}$ because $u \neq v \pmod {p^n}$,
and also $r \in \{1, 2,\cdots p-1\}$ because $\phi(1)(1)=rp^{n-1}+1\neq 1 \pmod {p^n}$.
In other words, neither $h$ nor $r$ is a zero divisor in $\Z_{p^n}$,
and thus their inverse elements also exist in $\Z_{p^n}$.
Furthermore, Eq.~(\ref{eq:notsep8}) holds if and only if $z=ch^{-1}r^{-1}-u \pmod {p}$,
that is,
\begin{equation}
z=ch^{-1}r^{-1}-u+mp,
\label{eq:notsep9}
\end{equation}
for some $m \in \{0, 1,\cdots, p^{n-1}-1\}$.

Eq.~(\ref{eq:notsep9}) implies that
given $u$ and $v$ satisfying Eq.~(\ref{vnotu}),
there are $p^{n-1}$ possible choices of $m$
for each $h \in \{1, 2,\cdots p-1\}$  such that Eq.~(\ref{eq:notsep9}) holds.
In other words, if $u$ and $v$ satisfy Eq.~(\ref{vnotu})
then there are $(p-1)p^{n-1}$ choices of $(z,0)$ in $K$,
for which $(z,0)$ does not separate $\left(u,0\right)$ and $\left(v,0\right)$.
For this case, the number of $(z, 0)$ in $K$ separating $\left(u,0\right)$ and $\left(v,0\right)$
is $p^{n}-(p-1)p^{n-1}=p^{n-1}$,
which is the number of $z$ in $\Z_{p^n}$ that does not satisfy Eq.~(\ref{eq:notsep9}).

Now let us consider the probability of randomly chosen $(z, 0)$ in $K$
that separates $\left(u,0\right)$ and $\left(v,0\right)$.
From Case 1 and 2,
we note that every $(z, 0)$ in $K$ separates $\left(u,0\right)$ and $\left(v,0\right)$
if $v-u$ is not divisible by $p^{n-1}$.
If $v-u$ is divisible by $p^{n-1}$
then there are $p^{n-1}$ number of $(z, 0)$ separating $\left(u,0\right)$ and $\left(v,0\right)$.
Thus the probability of randomly chosen $(z, 0)$ in $K$ that separates $\left(u,0\right)$ and $\left(v,0\right)$
is
\begin{align}
Prob[(z,0)&~\mathrm{separates}~(u,0)~\mathrm{and}~(v,0)]\nonumber\\
&=Prob[u\neq v \pmod {p^{n-1}}]\cdot 1 +Prob[u= v \pmod {p^{n-1}}]\cdot\frac{1}{p}.
\label{eq:prob1}
\end{align}

If $u= v \pmod {p^{n-1}}$, Eq.~(\ref{vnotu}) implies that for every $u$ in $\Z_{p^n}$,
there are $p-1$ number of possible $v$ satisfying $u=v \pmod {p^{n-1}}$.
Thus the total number of the unordered pairs $\{u, v\}$ satisfying $u=v \pmod {p^{n-1}}$ is $p^n(p-1)/2$
(the factor $1/2$ is to avoid doubly counting the unordered pair $\{u, v\}$).
Because there are $\binom{p^n}{2}$ ways to choose  $\{u, v\}$ from $\Z_{p^n}$,
we have
\begin{eqnarray}
Prob[u= v \pmod {p^{n-1}}]&=&\frac{p^n(p-1)/2}{\binom{p^n}{2}}=\frac{p-1}{p^n-1},\nonumber\\
Prob[u\neq v \pmod {p^{n-1}}]&=&1-\frac{p-1}{p^n-1},
\label{eq:prob2}
\end{eqnarray}
and together with Eq.~(\ref{eq:prob1}),
we have
\begin{align}
Prob[(z,0)~\mathrm{separates}~(u,0)~\mathrm{and}~(v,0)]
&=\left(1-\frac{p-1}{p^n-1}\right)\cdot 1 +\frac{p-1}{p^n-1}\cdot\frac{1}{p}\nonumber\\
&=1-\frac{\left(p-1\right)^2}{p\left(p^n-1 \right)}.
\label{eq:prob3}
\end{align}
\end{proof}

Theorem~\ref{thm:uvsepa} implies that
a randomly chosen element $(z,0)$ from $K$ separates given $(u,0)$ and $(v,0)$ with large probability.
In other words,
the probability that a randomly chosen element $(z,0)$ from $K$ does not separate given $(u,0)$ and $(v,0)$
is exponentially small with respect to the logarithm of the size of the group,
when the group is a semi-direct product of cyclic groups.
This idea leads us to the following theorem,
which assures the existence of an $\mathcal H$-strong base of small size with high probability
for this semi-direct product of cyclic groups.

\begin{Thm}
\label{The: semi-base}
Let $G=\Z_{p^n}\rtimes_\phi\Z_p$ be the semi-direct product of cyclic groups
with an odd prime $p$ and a positive integer $n$ such that $n\geq 2$.
$K=\Z_{p^n}\times \{0\}$ and $H=\{0\}\times\Z_{p}$ are two subgroups of $G$
where $G$ acts on $K$ with respect to the group action in Eq.~(\ref{action}),
and $\mathcal H$ is the set of all conjugate groups of $H$ in $G$.
If $B \subseteq K$ is a uniformly random set of size $\ell$,
with $\ell = \Theta\left(\frac{\ln |K| \log 1/ \epsilon}{\ln\left(\frac{p^n-1}{p-1}\right)}\right)$,
then $B$ is an $\mathcal H$-strong base with probability of at least $1-\epsilon$.
\end{Thm}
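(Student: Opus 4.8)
The plan is to reduce the assertion to a covering statement about separation and then to control it by the union bound. First I would invoke the observation recorded just after Eq.~(\ref{eq:Faction}): since $\mathcal H$ is the full conjugacy class $\{(t,0)H(-t,0)\}$ of $H$, a subset of $K$ is an $\mathcal H$-strong base if and only if it is an $H$-strong base. By Proposition~\ref{Frobenius-base-lemma}, $B$ is an $H$-strong base precisely when for every pair $u\neq v$ in $\Z_{p^n}$ there is some $z\in B$ separating $(u,0)$ and $(v,0)$. Hence it suffices to prove that, for $\ell$ of the stated order, a uniformly random $\ell$-subset $B\subseteq K$ simultaneously separates all $\binom{p^n}{2}$ pairs with probability at least $1-\epsilon$.

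Next I would control a single pair. Fix $u\neq v$ and set $\eta=\frac{(p-1)^2}{p(p^n-1)}$. By Theorem~\ref{thm:uvsepa}, a uniformly random $(z,0)\in K$ fails to separate $(u,0)$ and $(v,0)$ with probability at most $\eta$; writing $S_{u,v}\subseteq K$ for the elements that do separate the pair, this is the statement $|S_{u,v}|\geq(1-\eta)|K|$. The event that $B$ fails to separate this fixed pair is exactly the event that the random $\ell$-subset $B$ misses $S_{u,v}$, whose probability is $\binom{|K|-|S_{u,v}|}{\ell}/\binom{|K|}{\ell}\leq\left(1-|S_{u,v}|/|K|\right)^{\ell}\leq\eta^{\ell}$, the first inequality accounting for sampling without replacement (one may equivalently model $B$ as $\ell$ independent uniform draws, which only enlarges the bound).

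Then I would assemble the pieces with a union bound over the fewer than $|K|^2/2$ unordered pairs, giving
\begin{equation}
\mathrm{Prob}\left[B\text{ is not an }\mathcal H\text{-strong base}\right]\leq\frac{|K|^2}{2}\,\eta^{\ell}.
\label{eq:plan-union}
\end{equation}
Forcing the right-hand side below $\epsilon$ and taking logarithms yields the threshold
\begin{equation}
\ell\geq\frac{2\ln|K|+\ln(1/\epsilon)-\ln 2}{\ln(1/\eta)}.
\label{eq:plan-ell}
\end{equation}
To match the claimed form I would simplify the denominator via $\ln(1/\eta)=\ln\frac{p^n-1}{p-1}+\ln\frac{p}{p-1}=\Theta\!\left(\ln\frac{p^n-1}{p-1}\right)$, the second term being negligible, and bound the numerator $2\ln|K|+\ln(1/\epsilon)$ by $O\!\left(\ln|K|\cdot\log(1/\epsilon)\right)$ (the two logarithmic scales differing only by an absolute constant, and a sum being dominated by a product once both factors exceed a constant). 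This produces $\ell=\Theta\!\left(\frac{\ln|K|\log(1/\epsilon)}{\ln\left(\frac{p^n-1}{p-1}\right)}\right)$, as desired.

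The main obstacle I anticipate is not the union bound itself but the quantitative bookkeeping: extracting the precise factor $\ln\frac{p^n-1}{p-1}$ from $\ln(1/\eta)$ and reconciling the additive numerator of Eq.~(\ref{eq:plan-ell}) with the multiplicative form claimed for $\ell$. A secondary point to handle with care is the passage from the per-pair estimate of Theorem~\ref{thm:uvsepa} to a simultaneous guarantee over all pairs, where one must verify that the random-set model (sampling without replacement) is correctly dominated by the independent-draw bound $\eta^{\ell}$ used in Eq.~(\ref{eq:plan-union}).
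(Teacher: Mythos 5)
Your proposal is correct and follows essentially the same route as the paper's proof: reduce to pairwise separation via Proposition~\ref{Frobenius-base-lemma}, apply the per-pair estimate of Theorem~\ref{thm:uvsepa}, and finish with a union bound over the $\binom{|K|}{2}$ pairs before solving for $\ell$. The only differences are cosmetic --- the paper weakens the non-separation probability to $\frac{p-1}{p^n-1}$ at the outset (so the denominator $\ln\frac{p^n-1}{p-1}$ appears directly), and it passes silently over the sampling-without-replacement point that you justify explicitly with the hypergeometric bound.
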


\begin{proof}
Let $B$ be a uniformly random subset of $K$ of size $\ell$.
By Proposition~\ref{Frobenius-base-lemma},
it is sufficient to prove that for every $u \neq v \pmod {p^n}$,
there exists an element in $B$ which separates $\left(u,0\right)$ and $\left(v,0\right)$
with probability of at least $1-\epsilon$.
In this proof, we will consider an upper bound of the probability of the opposite event.

From Theorem~\ref{thm:uvsepa},
the probability that a random $(z, 0)$ from $K$ does not separate
$\left(u,0\right)$ and $\left(v,0\right)$ for a fixed pair $u \neq v \pmod {p^n}$
is at most $\frac{p-1}{p^n-1}$.
Therefore, the probability that none of the elements in $B$ separates $\left(u,0\right)$ and $\left(v,0\right)$ is
not more than $\left(\frac{p-1}{p^n-1}\right)^{\ell}$.
Thus, the probability that for some pair $u \neq v \pmod {p^n}$
none of the elements in $B$ separates $u$ and $v$ is
less than or equal to $\binom{|K|}{2}\left(\frac{p-1}{p^n-1}\right)^{\ell}$,
which is at most $\epsilon$ by the choice of $\ell$.
\end{proof}

For $G=\Z_{p^n}\rtimes_\phi\Z_p$,
Theorem~\ref{The: semi-base} implies that
we can efficiently compute an $\mathcal H$-strong base of small size for the set of closed subgroups $\mathcal H$.
Therefore, by Proposition~\ref{prop:reduction},
HSSP on $\Z_{p^n}\rtimes_\phi\Z_p$ is efficiently reduced to a HSP on $\Z_{p^n}\rtimes_\phi\Z_p$.
Finally, we would like to remark that
there exists a polynomial-time quantum algorithm solving HSP on $\Z_{p^n}\rtimes_\phi\Z_p$
for any odd prime $p$ and positive integer $n$~\cite{inui,CKL}.
Thus we can have an efficient quantum algorithm for HSSP on $\Z_{p^n}\rtimes_\phi\Z_p$.

\begin{Cor}
\label{The: QAHSSP1}
Let $G=\Z_{p^n}\rtimes_\phi\Z_p$ be the semi-direct product of cyclic groups with an odd
prime $p$ and a positive integer $n$ such that $n\geq 2$. $K=\Z_{p^n}\times \{0\}$,
$H=\{0\}\times\Z_{p}$ are two subgroups of $G$ where $G$ acts on $K$ with respect to the group action
in Eq.~(\ref{action}) and $\mathcal H$ is the set of all conjugate groups of $H$ in $G$.
Then there exists a polynomial-time quantum algorithm solving HSSP on $G$.
\end{Cor}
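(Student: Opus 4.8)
The plan is to assemble the reduction machinery of Section~\ref{A reduction scheme of HSSP to HSP} with the structural results of Section~\ref{sec: HSSPcyclic1} and the known quantum algorithm for the associated HSP. First I would check that the hypotheses of Proposition~\ref{prop:reduction} concerning the action hold: the action $\circ$ of Eq.~(\ref{action}) satisfies $\left(y,h\right)\circ\left(x,0\right)=\left(y+x\phi(1)(1)^h,0\right)$ by Eq.~(\ref{phihomo}), so evaluating it requires only one modular exponentiation and one addition in $\Z_{p^n}$, both running in time polynomial in $\log|G|=(n+1)\log p$. Thus $\circ$ is polynomial-time computable, and it remains to supply an efficiently computable $\mathcal H$-strong base.

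Next I would invoke Theorem~\ref{The: semi-base}: a uniformly random set $B\subseteq K$ of size $\ell=\Theta\!\left(\frac{\ln|K|\log 1/\epsilon}{\ln((p^n-1)/(p-1))}\right)$ is an $\mathcal H$-strong base with probability at least $1-\epsilon$. The key point to verify is that $\ell$ stays polynomial in $\log|G|$. Since $\ln|K|=n\ln p$ while $\ln\frac{p^n-1}{p-1}\ge(n-1)\ln p$, the leading factor $\ln|K|/\ln\frac{p^n-1}{p-1}$ is bounded by $n/(n-1)\le 2$, so $\ell=\Theta(\log 1/\epsilon)$. Choosing $\epsilon$ inverse-polynomially (or even inverse-exponentially) small therefore keeps $\ell$ polynomial, and $B$ is sampled classically in polynomial time. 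Because the family $\mathcal H$ of Eq.~(\ref{calH}) consists precisely of the conjugates of $H$, the remarks following Definition~\ref{Def:strongbase-def} show that being $H$-strong and being $\mathcal H$-strong coincide here, so a single random $B$ serves uniformly no matter which $H\in\mathcal H$ is hidden.

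With $B=\{m_1,\ldots,m_\ell\}$ in hand, I would apply Proposition~\ref{Prop:reduction} to the given HSSP oracle $f$ that hides some $H\in\mathcal H$ by symmetries: the derived function $f_{\rm HSP}(g)=\left(f(g\circ m_1),\ldots,f(g\circ m_\ell)\right)$ hides $H$ in the ordinary HSP sense on $G$. Each evaluation of $f_{\rm HSP}$ costs $\ell$ applications of the action plus $\ell$ queries to $f$, hence is efficient. Finally I would run the polynomial-time quantum algorithm for HSP on $\Z_{p^n}\rtimes_\phi\Z_p$ of Refs.~\cite{inui,CKL} on $f_{\rm HSP}$ to recover generators of $H$, and output $H$ (equivalently, identify the index $t$ of $H=\left(t,0\right)H\left(-t,0\right)$ via Theorem~\ref{H_tfix}). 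Composing the polynomial-time sampling, the polynomial-query reduction, and the polynomial-time HSP solver yields an overall polynomial-time quantum algorithm.

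The computation is essentially bookkeeping, since Theorems~\ref{thm:uvsepa} and~\ref{The: semi-base} already carry the analytic weight. The only genuine point requiring care is the probabilistic nature of the base: one cannot efficiently certify that a particular $B$ is $\mathcal H$-strong, as that would entail testing separation over all $\binom{|K|}{2}$ pairs. Consequently the algorithm succeeds only with the high probability guaranteed by Theorem~\ref{The: semi-base}, which I would then compose with the success probability of the HSP solver and boost by standard repetition to obtain a bounded-error quantum procedure.
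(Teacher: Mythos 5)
Your proposal is correct and follows essentially the same route as the paper: apply Theorem~\ref{The: semi-base} to obtain an efficiently computable $\mathcal H$-strong base, reduce the HSSP to an HSP on $G$ via Propositions~\ref{Prop:reduction} and~\ref{prop:reduction}, and invoke the known polynomial-time quantum HSP algorithm of Refs.~\cite{inui,CKL}. Your added bookkeeping (bounding $\ell$ by $\Theta(\log 1/\epsilon)$ and tracking the probabilistic success of the random base) only makes explicit what the paper leaves implicit.
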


Now, we consider a possible reduction scheme of HSSP defined on $\Z_{N}\rtimes_\phi\Z_p$
to a HSSP on $\Z_{p^n}\rtimes_\phi\Z_p$ for some case of $N$,
by using the same arguments as in Ref.~\cite{CKL}.
We first consider the case when $N=q^sp^n$ for some prime $q$ such that
$(p,q)=1$ and $p$ does not divide $q-1$,
and we further consider more general case of $N$.

If $N=q^sp^n$,
the fundamental theorem of finitely generated abelian groups implies
that $\Z_{N}$ is isomorphic to $\Z_{q^{s}}\times\Z_{p^{n}}$,
and thus we will assume $G=(\Z_{q^{s}}\times\Z_{p^{n}})\rtimes_\phi\Z_p$.
Similar to the case of HSSP on $\Z_{p^n}\rtimes_\phi\Z_p$,
let us consider the subgroups of $G$,
$K=(\Z_{q^{s}}\times\Z_{p^{n}})\times \{0\}$ and $H=\{(0,0) \}\times\Z_p$,
and the group action $\circ : G \times K \rightarrow K$ defined by
\begin{eqnarray}
\left(a,b,h \right)\circ \left(x,y,0 \right)
&=&\left(a,b,h \right)\left(x,y,0 \right)\left(0,0,-h \right)\nonumber\\
&=&\left((a,b)+\phi(h)(x,y), 0 \right),
\label{action2}
\end{eqnarray}
for any  $\left(a,b,h\right) \in G$  and $\left(x,y,0\right) \in K$.
(or equivalently, for any $a, x\in \Z_{q^s}$, $b, y\in \Z_{p^n}$ and $h \in \Z_p$.)

The set of closed subgroup $\mathcal H'$
is given by the set of all conjugate groups of $H$,
and the oracle function $f$ is defined on $K$ to some finite set $S$
such that
\begin{equation}
f(x,y) = f(x',y') \Longleftrightarrow H' \circ (x,y,0) = H' \circ (x',y',0),
\label{f2}
\end{equation}
for some subgroup $H' \in {\cal H'}$.
The task of HSSP on $G$ is to determine the subgroup $H'$.

We now take into account the following proposition~\cite{CKL}
\begin{Prop}\label{Lem:note2}
Let $p$ and $q$ be distinct primes satisfying \mbox{$p\nmid q-1$},
then
\begin{equation}
(\Z_{q^{s}}\times\Z_{p^{n}})\rtimes_{\phi}\Z_{p}\cong
\Z_{q^{s}}\times (\Z_{p^{n}}\rtimes_{\psi}\Z_{p})
\label{eq:Lem2}
\end{equation}
for some homomorphism $\psi$ from $\Z_{p}$ to $\mathrm{Aut}(\Z_{p^{n}})$.
\end{Prop}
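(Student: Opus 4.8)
The plan is to exploit the coprimality of $q^s$ and $p^n$ to split the automorphism group on which $\phi$ acts, and then to read off that $\phi$ cannot touch the $\Z_{q^s}$ component at all. Since $(p,q)=1$ we have $\gcd(q^s,p^n)=1$, so by the Chinese Remainder Theorem $\Z_{q^s}\times\Z_{p^n}\cong\Z_{q^sp^n}$ and the automorphism group factors as
\begin{equation}
\mathrm{Aut}\left(\Z_{q^s}\times\Z_{p^n}\right)\cong \mathrm{Aut}\left(\Z_{q^s}\right)\times\mathrm{Aut}\left(\Z_{p^n}\right).
\end{equation}
Accordingly I would write each automorphism $\phi(h)$ in the form $\phi(h)=(\alpha(h),\beta(h))$, where $\alpha:\Z_p\to\mathrm{Aut}(\Z_{q^s})$ and $\beta:\Z_p\to\mathrm{Aut}(\Z_{p^n})$ are the component homomorphisms obtained by composing $\phi$ with the two projections.

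The key step is to show that $\alpha$ is trivial. Since $q$ is prime, $|\mathrm{Aut}(\Z_{q^s})|=|\Z_{q^s}^*|=q^{s-1}(q-1)$. The image $\alpha(\Z_p)$ is a subgroup of $\mathrm{Aut}(\Z_{q^s})$ whose order divides $|\Z_p|=p$, so it must divide $\gcd\left(p,q^{s-1}(q-1)\right)$. Here the hypotheses enter: $p\neq q$ gives $p\nmid q^{s-1}$, and $p\nmid q-1$ is assumed, so $\gcd\left(p,q^{s-1}(q-1)\right)=1$. Hence $\alpha(\Z_p)$ is trivial, i.e. $\alpha(h)=\mathrm{id}$ for every $h\in\Z_p$. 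Setting $\psi:=\beta$, this means $\phi(h)(x,y)=(x,\psi(h)(y))$ for all $(x,y)\in\Z_{q^s}\times\Z_{p^n}$.

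With $\phi$ now acting trivially on the first coordinate, the remaining task is purely formal: I would exhibit the relabeling map $\Phi:(a,b,h)\mapsto(a,(b,h))$ from $(\Z_{q^s}\times\Z_{p^n})\rtimes_\phi\Z_p$ to $\Z_{q^s}\times(\Z_{p^n}\rtimes_\psi\Z_p)$ and check that it respects the group laws. Using the semi-direct product rule together with the triviality of $\alpha$, the product of $(a_1,b_1,h_1)$ and $(a_2,b_2,h_2)$ on the left is $\left(a_1+a_2,\,b_1+\psi(h_1)(b_2),\,h_1+h_2\right)$, which $\Phi$ sends to $\left(a_1+a_2,\,(b_1+\psi(h_1)(b_2),h_1+h_2)\right)$; this is exactly the product of $\Phi(a_1,b_1,h_1)$ and $\Phi(a_2,b_2,h_2)$ in $\Z_{q^s}\times(\Z_{p^n}\rtimes_\psi\Z_p)$. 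Since $\Phi$ is visibly a bijection, it is the desired isomorphism.

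The main obstacle is concentrated entirely in the second step, where the number-theoretic hypothesis $p\nmid q-1$ is indispensable: without it, $\alpha$ could be a nontrivial homomorphism into the $(q-1)$-part of $\Z_{q^s}^*$, the action would genuinely entangle the two factors, and the clean direct-product splitting would fail. Everything else---the CRT decomposition of the automorphism group and the verification that $\Phi$ is a homomorphism---is routine once this coprimality observation is in place.
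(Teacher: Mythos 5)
Your proof is correct and follows essentially the same route as the paper: both arguments reduce to showing that $\phi$ acts trivially on the $\Z_{q^s}$ factor because the order of its image there must divide both $p$ and $q^{s-1}(q-1)$, which are coprime under the hypotheses. The paper phrases this by computing $\phi(1)(1,0)=(a,0)$ with $a^p=1\pmod{q^s}$ and concluding $a=1$, whereas you argue at the level of the component homomorphism $\alpha:\Z_p\to\mathrm{Aut}(\Z_{q^s})$; these are the same observation, and your explicit verification of the relabeling isomorphism only spells out what the paper's final displayed group law leaves implicit.
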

\begin{proof}
Since $\phi(p)=\phi(0)$ is the identity map $\cI$ on $\Z_{q^{s}}\times \Z_{p^{n}}$,
we have
\begin{equation}
(1,0)=\cI(1,0)= \phi(p)(1,0)=\phi(1)^{p}(1,0)=(a^{p},0),
\label{eq:phi1p}
\end{equation}
where $(a,0)=\phi(1)(1,0)$ and $a^{p}= 1\pmod{q^{s}}$.
Since the order of $\Z_{q^{s}}^{*}$ is $q^{s-1}(q-1)$ and $p\nmid q-1$,
we obtain that $a$ must be $1$, that is, $\phi$ trivially acts on $\Z_{q^{s}}$.
Thus, for each $\alpha\in\Z_p$,
$\phi(\alpha)=\cI_0\times \psi(\alpha)$,
where $\cI_0$ is the identity map on $\Z_{q^{s}}$ and
$\psi$ is a homomorphism from $\Z_{p}$ to $\mathrm{Aut}(\Z_{p^{n}})$.

Therefore, the operation of the semi-direct product group is as follows:
\begin{eqnarray}
((a,b),c)((a',b'),c')&=&((a,b)+\phi(c)(a',b'),c+c')\nonumber\\
&=&(a+a',b+\psi(c)(b'),c+c'),\nonumber\\
\label{eq:operation_sdpg}
\end{eqnarray}
which implies Eq.~(\ref{eq:Lem2}).
\end{proof}

Proposition~\ref{Lem:note2} implies that for
any $(a,b,0) \in K$, and $(0,0,h) \in H$, we have
\begin{eqnarray}
(a,b,0)(0,0,h)(-a,-b,0)
&=&(a,b,h)(-a,-b,0)\nonumber\\
&=&\left((a,b)+\phi(h)(-a,-b), h\right)\nonumber\\
&=&\left(0,b-\psi(h)(b),h\right).
\label{conj1}
\end{eqnarray}
Thus, the set of closed subgroups $\mathcal H'$ consists of all conjugate groups of $H$
whose element has $0$ in the first coordinate;
\begin{equation}
\mathcal H'=\{(0,b,0)H(0,-b,0)|b \in \Z_{p^{n}}\}.
\label{conjH2}
\end{equation}
From Theorem~\ref{H_P_t} in Section~\ref{Stabilizer Subgroups},
we have the following corollary.
\begin{Cor}
For any $\left(0,b,0\right) \in K$ such that $b\in P_t=\{pk+t| 0 \leq k \leq p^{n-1}-1\}$,
\begin{equation}
\left(0,b,0\right)H\left(0,-b,0\right)=\left(0,t,0\right)H\left(0,-t,0\right).
\label{H_t2}
\end{equation}
\label{H_P_t2}
\end{Cor}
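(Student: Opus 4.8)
The plan is to deduce this directly from Theorem~\ref{H_P_t} by means of the direct-product decomposition supplied by Proposition~\ref{Lem:note2}. First I would invoke Proposition~\ref{Lem:note2} to identify $G$ with $\Z_{q^{s}}\times(\Z_{p^{n}}\rtimes_\psi\Z_p)$, under which the subgroup $H=\{(0,0)\}\times\Z_p$ lies wholly inside the second factor, and conjugation by any element of the form $(0,b,0)$ leaves the $\Z_{q^{s}}$ coordinate untouched. Consequently the entire conjugation $(0,b,0)H(0,-b,0)$ takes place within the factor $\Z_{p^{n}}\rtimes_\psi\Z_p$.

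Next I would make this explicit using Eq.~(\ref{conj1}), which already records that
\begin{equation}
(0,b,0)H(0,-b,0)=\{(0,b-\psi(h)(b),h) : h\in\Z_p\},
\end{equation}
and analogously $(0,t,0)H(0,-t,0)=\{(0,t-\psi(h)(t),h) : h\in\Z_p\}$. Thus the claimed equality of subgroups reduces to the single coordinate identity $b-\psi(h)(b)=t-\psi(h)(t)\pmod{p^{n}}$ holding for every $h\in\Z_p$.

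Finally, I would observe that since $\psi$ is a homomorphism from $\Z_p$ into $\mathrm{Aut}(\Z_{p^{n}})$, the analysis leading to Eq.~(\ref{generator}) forces $\psi(1)(1)=r'p^{n-1}+1\pmod{p^{n}}$ for some $r'$, so the factor $\Z_{p^{n}}\rtimes_\psi\Z_p$ is exactly of the type treated in Section~\ref{sec: HSSPcyclic1}. Writing $b=pk+t$ and expanding $\psi(h)(pk)=pk\,(r'p^{n-1}+1)^{h}\equiv pk\pmod{p^{n}}$ by the binomial theorem --- this is precisely the computation in Eq.~(\ref{eq: H_t3}) --- yields $b-\psi(h)(b)=t-\psi(h)(t)$, which is Eq.~(\ref{eq: H_t2}) verbatim with $\phi$ replaced by $\psi$. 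Hence Theorem~\ref{H_P_t}, applied to the second factor, gives the equality directly. There is no serious obstacle here; the only point requiring care is confirming that the splitting of Proposition~\ref{Lem:note2} transports the whole conjugation into the $\Z_{p^{n}}\rtimes_\psi\Z_p$ factor, after which the corollary is an immediate instance of the already-established Theorem~\ref{H_P_t}.
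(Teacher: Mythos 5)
Your proof is correct and follows essentially the same route as the paper, which obtains the corollary by combining the conjugation computation in Eq.~(\ref{conj1}) with Theorem~\ref{H_P_t} applied to the $\Z_{p^{n}}\rtimes_\psi\Z_p$ factor. Your version merely spells out the intermediate identity $b-\psi(h)(b)=t-\psi(h)(t)\pmod{p^{n}}$, which is exactly the content of Eqs.~(\ref{eq: H_t2})--(\ref{eq: H_t3}) with $\phi$ replaced by $\psi$.
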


In other words, if we recall Eq.~(\ref{calH}), which is the set $\mathcal H$ of the closed subgroups defined for the HSSP on $\Z_{p^n}\rtimes_\phi\Z_p$,
we note that there exists a natural one-to-one correspondence between $\mathcal H'=\{\left(0,t,0\right)H\left(0,-t,0\right)| 0\leq t \leq p-1 \}$ and
$\mathcal H=\{\left(t,0\right)H\left(-t,0\right)| 0\leq t \leq p-1 \}$.

Now we characterize the group action in Eq.~(\ref{action2}) and the stabilizer subgroups of each element in $K$ under this action.
For any $(a,b,h) \in G$ and $(x,y,0) \in K$, suppose $(x,y,0)$ is fixed by $(a,b,h)$ under the action, then we have
\begin{eqnarray}
(a,b,h)\circ(x,y,0)
&=&(a,b,h)(x,y,0)(0,0,-h)\nonumber\\
&=&\left((a,b)+\phi(h)(x,y),0\right)\nonumber\\
&=&\left(a+x,b+\psi(h)(y),0\right)\nonumber\\
&=&(x,y,0).
\label{fix2}
\end{eqnarray}
Thus for any $(a,b,h) \in G$, if $(a,b,h)$ fixes any element $(x,y,0)$ in $K$ then  $a+x=x \pmod {q^s}$,
which implies $a=0 \pmod {q^s}$.
In other words,  $(a,b,h)=(0,b,h)$ belongs to a conjugate group of $H$ in $\mathcal H'$.
We also note that Eq.~(\ref{fix2}) implies that $b+\phi(h)(y)=y \pmod {p^n}$.
Thus we have the following corollary.

\begin{Cor}
For any $\left(x,y,0\right) \in K$ such that $y\in P_t=\{pk+t| 0 \leq k \leq p^{n-1}-1\}$,
$\left(x,y,0\right)$ is fixed by $\left(0,b,h\right) \in G$ under the group action $\circ$
if and only if  $\left(0,b,h\right) \in \left(0,t,0\right)H\left(0,-t,0\right)$.
\label{H_tfix2}
\end{Cor}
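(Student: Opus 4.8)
The plan is to reduce the statement directly to Theorem~\ref{H_tfix}, exploiting the fact that the action in Eq.~(\ref{action2}) decouples the two coordinates of $K$ while the $\Z_{q^s}$ factor is acted on trivially by Proposition~\ref{Lem:note2}. The discussion immediately preceding this corollary already shows that any element of $G$ fixing some $(x,y,0)$ must have $0$ in its first coordinate, so it suffices to determine which elements of the form $(0,b,h)$ fix $(x,y,0)$.

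First I would write the fixing condition explicitly. By Eq.~(\ref{action2}),
\begin{equation}
(0,b,h)\circ(x,y,0)=\left(x,b+\psi(h)(y),0\right),
\end{equation}
so the first coordinate is fixed automatically, and $(0,b,h)$ fixes $(x,y,0)$ if and only if $b+\psi(h)(y)=y \pmod{p^n}$. Next I would reduce this to a condition on $t$ alone: writing $y=pk+t$ and expanding $\psi(h)(y)=y\left(rp^{n-1}+1\right)^h$ by the binomial theorem modulo $p^n$ exactly as in Eq.~(\ref{eq: H_t3}) and Eq.~(\ref{eq:tht2}), the multiple-of-$p$ part $pk$ is left unchanged, giving $\psi(h)(y)=pk+\psi(h)(t) \pmod{p^n}$. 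Substituting this into $b+\psi(h)(y)=y$ then yields $b=t-\psi(h)(t) \pmod{p^n}$.

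Finally I would identify this condition with membership in the conjugate group. By Eq.~(\ref{conj1}), $(0,t,0)(0,0,h)(0,-t,0)=\left(0,t-\psi(h)(t),h\right)$, so $b=t-\psi(h)(t) \pmod{p^n}$ is precisely the statement $(0,b,h)\in(0,t,0)H(0,-t,0)$, and both directions of the equivalence follow at once. I do not expect any genuine obstacle here: the argument is simply Theorem~\ref{H_tfix} transported through the natural correspondence between $\mathcal H$ and $\mathcal H'$ provided by Corollary~\ref{H_P_t2}, the only computation being the binomial reduction, which has already been carried out verbatim in the proof of Theorem~\ref{H_tfix}.
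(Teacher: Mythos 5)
Your proposal is correct and follows essentially the same route as the paper, which simply cites Theorem~\ref{H_tfix} as a direct consequence; you have merely spelled out the intermediate steps (trivial action on the $\Z_{q^s}$ coordinate, the binomial reduction giving $b=t-\psi(h)(t)$, and the identification with Eq.~(\ref{conj1})) that the paper leaves implicit.
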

\begin{proof}
This is a direct consequence from Theorem~\ref{H_tfix} in Section~\ref{Stabilizer Subgroups}.
\end{proof}

From the definition of the oracle function in Eq.~(\ref{f2}),
we note that for any $(x,y,0)$ and $(x',y',0)$ in $K$,
we have $f(x,y) = f(x',y')$  if and only if
\begin{equation}
H' \circ (x,y,0) = H' \circ (x',y',0),
\label{f3}
\end{equation}
for some $H' \in \mathcal H'$.
By Corollary~\ref{H_P_t2}, we also note that
$H'=\left(0,t,0\right)H\left(0,-t,0\right)$ for some $t\in \{0,1,\cdots, p-1\}$.
Thus Eq.~(\ref{f3}) is equivalent to the existence of some $(0,0,h)$ and $(0,0,h')$ in $H$ such that
\begin{eqnarray}
\left[(0,t,0)(0,0,h)(0,-t,0)\right]\circ (x,y,0)= \left[(0,t,0)(0,0,h')(0,-t,0)\right]\circ (x',y',0),
\label{f4}
\end{eqnarray}
where
\begin{eqnarray}
\left[(0,t,0)(0,0,h)(0,-t,0)\right]\circ (x,y,0)
&=&(0,t-\psi(h)(t),h)\circ (x,y,0)\nonumber\\
&=&(x,t-\psi(h)(t)+\psi(h)(y),0),
\label{f5}
\end{eqnarray}
and
\begin{eqnarray}
\left[(0,t,0)(0,0,h')(0,-t,0)\right]\circ (x',y',0)
&=&(0,t-\psi(h')(t),h')\circ (x',y',0)\nonumber\\
&=&(x',t-\psi(h')(t)+\psi(h')(y'),0).
\label{f6}
\end{eqnarray}
From Eq.~(\ref{f4}) together with Eqs.~(\ref{f5}) and (\ref{f6}), we note that $f(x,y) = f(x',y')$  if and only if
\begin{equation}
x=x'\pmod{q^{s}},~-\psi(h)(t)+\psi(h)(y)=-\psi(h')(t)+\psi(h')(y')\pmod{p^{n}},
\end{equation}
for some $h$, $h'$ and $t$ in $\Z_{p}$.

Now for any HSSP defined on $(\Z_{q^{s}}\times\Z_{p^{n}})\rtimes_\phi\Z_p$
with the set of closed subsets $\mathcal H'=\{\left(0,t,0\right)H\left(0,-t,0\right)| 0\leq t \leq p-1 \}$
and the oracle function
\begin{equation}
f(x,y) = f(x',y') \Longleftrightarrow H' \circ (x,y,0) = H' \circ (x',y',0),
\label{orif}
\end{equation}
we can always consider
the corresponding HSSP defined on $\Z_{p^n}\rtimes_\phi\Z_p$ with the set of closed subgroups
$\mathcal H=\{\left(t,0\right)H\left(-t,0\right)| 0\leq t \leq p-1 \}$ and the reduced oracle function
$g$ defined on $\Z_{p^n}\rtimes\{0\}$ such that
\begin{equation}
g(y)=g(y') \Longleftrightarrow H' \circ (y,0) = H' \circ (y',0)
\label{redf}
\end{equation}
for some $H' \in \mathcal H$.

Furthermore, to find $H'$ in $\mathcal H'$ satisfying Eq.~(\ref{orif})
for any $(x,y,0)$ and $(x',y',0)$ in $K=\Z_{q^s}\rtimes\Z_{p^n}\rtimes\{0\}$,
it is enough to find $H'$ in $\mathcal H$ satisfying Eq.~(\ref{redf})
for any $(y,0)$ and $(y',0)$ in $K=\Z_{p^n}\rtimes\{0\}$ due to the one-to-one correspondence between $\mathcal H$ and $\mathcal H'$
with respect to the oracle functions $f$ and $g$ respectively.
Thus we have the following theorem,
which states a natural reduction of HSSP on $(\Z_{q^{s}}\times\Z_{p^{n}})\rtimes_\phi\Z_p$
to HSSP on $\Z_{p^n}\rtimes_\phi\Z_p$.

\begin{Thm}
Any HSSP defined on $(\Z_{q^{s}}\times\Z_{p^{n}})\rtimes_\phi\Z_p$
with respect to the group action in Eq.~(\ref{action2})
can be naturally reduced to the HSSP on $\Z_{p^n}\rtimes_\phi\Z_p$
with respect to the group action in Eq.~(\ref{action}).
\label{red1}
\end{Thm}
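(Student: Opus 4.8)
The plan is to turn the informal reduction sketched in the paragraphs preceding the statement into an explicit recipe: given an oracle for the HSSP on $(\Z_{q^s}\times\Z_{p^n})\rtimes_\phi\Z_p$, I would build an oracle for the HSSP on $\Z_{p^n}\rtimes_\phi\Z_p$ and argue that solving the latter recovers the hidden subgroup of the former. First I would invoke Proposition~\ref{Lem:note2} to replace $\phi$ by $\cI_0\times\psi$, so that $G\cong\Z_{q^s}\times(\Z_{p^n}\rtimes_\psi\Z_p)$ and the action in Eq.~(\ref{action2}) acts trivially on the $\Z_{q^s}$-coordinate. By Corollary~\ref{H_P_t2} the family $\mathcal H'$ of closed subgroups is parametrized by $t\in\{0,1,\dots,p-1\}$ through $(0,t,0)H(0,-t,0)$, and this parametrization is in natural one-to-one correspondence with the family $\mathcal H=\{(t,0)H(-t,0)\}$ attached to the HSSP on $\Z_{p^n}\rtimes_\phi\Z_p$. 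Hence it is enough to recover the index $t$ of the hidden subgroup $H'\in\mathcal H'$.

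Next I would construct the reduced oracle. Suppose $f$ hides $H'=(0,t,0)H(0,-t,0)$ by symmetries, and define $g$ on $\Z_{p^n}\times\{0\}$ by $g(y):=f(0,y)$, that is, by querying $f$ at the point $(0,y,0)\in K$; this is a single legal query, so $g$ is efficiently computable from $f$. The core verification is that $g$ is a valid HSSP oracle for $\Z_{p^n}\rtimes_\phi\Z_p$ hiding the corresponding subgroup $(t,0)H(-t,0)\in\mathcal H$. This follows from the characterization established just before the statement: since $f$ hides the fixed $H'$, one has $f(x,y)=f(x',y')$ if and only if $x\equiv x'\pmod{q^s}$ and $-\psi(h)(t)+\psi(h)(y)=-\psi(h')(t)+\psi(h')(y')\pmod{p^n}$ for some $h,h'\in\Z_p$. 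Fixing the first coordinate to $0$ makes the congruence $x\equiv x'\pmod{q^s}$ automatic, so $g(y)=g(y')$ holds if and only if the remaining $\Z_{p^n}$-condition holds. Adding $t$ to both sides of that condition exhibits it as the statement that some element of $(t,0)H(-t,0)$ carries $(y,0)$ to the same point to which another such element carries $(y',0)$; by the orbit structure of $(t,0)H(-t,0)$ described in Section~\ref{Stabilizer Subgroups} (Theorem~\ref{H_tfix}), this is precisely $(t,0)H(-t,0)\circ(y,0)=(t,0)H(-t,0)\circ(y',0)$ under the action in Eq.~(\ref{action}). Thus the level sets of $g$ are exactly the orbits of $(t,0)H(-t,0)$, and $g$ satisfies the HSSP promise of Eq.~(\ref{redf}).

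It then remains to close the loop. Since $(t,0)H(-t,0)$ is a closed subgroup of $\Z_{p^n}\rtimes_\phi\Z_p$, solving the HSSP determined by $g$ returns this subgroup, hence its index $t$; feeding $t$ through the one-to-one correspondence of Corollary~\ref{H_P_t2} yields the hidden subgroup $H'=(0,t,0)H(0,-t,0)\in\mathcal H'$. This completes the reduction of the HSSP on $(\Z_{q^s}\times\Z_{p^n})\rtimes_\phi\Z_p$ to the HSSP on $\Z_{p^n}\rtimes_\phi\Z_p$.

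The step I expect to be the main obstacle is the core verification that $g$ is a bona fide HSSP oracle, namely that restricting to $x=0$ neither merges distinct level sets nor splits a single one. This is exactly where the hypothesis $p\nmid q-1$ does the work: through Proposition~\ref{Lem:note2} it forces the action on the $\Z_{q^s}$-coordinate to be trivial, so the equivalence relation induced by $f$ factors as a product of a $\Z_{q^s}$-part and a $\Z_{p^n}$-part, and the $\Z_{p^n}$-part is transported faithfully onto $g$. Were $\phi$ to act nontrivially on $\Z_{q^s}$, pinning the first coordinate to $0$ could discard information about $t$, and this clean reduction would break down.
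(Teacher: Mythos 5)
Your proposal is correct and follows essentially the same route as the paper, whose proof of Theorem~\ref{red1} is precisely the preceding discussion: factor the group via Proposition~\ref{Lem:note2}, use the one-to-one correspondence between $\mathcal H'$ and $\mathcal H$ from Corollary~\ref{H_P_t2}, and pass from $f$ to a reduced oracle $g$ on $\Z_{p^n}\times\{0\}$ whose level sets are the orbits of the corresponding conjugate of $H$. Your only addition is to make the construction $g(y)=f(0,y)$ explicit and to verify directly that restricting to $x=0$ neither merges nor splits level sets, which the paper leaves implicit; this is a welcome clarification rather than a different argument.
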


Now let us consider a possible reduction of HSSP on $\Z_{N}\rtimes_\phi\Z_p$
to HSSP on $\Z_{p^n}\rtimes_\phi\Z_p$ for more general case of $N$.
By the fundamental theorem of arithmetics, $N$ can be factorized into powers of distinct primes
$N=p_1^{r_1}p_2^{r_2}\cdots p_k^{r_k}$,
and the fundamental theorem of finitely generated abelian groups implies that
$\Z_{N}$ is isomorphic to the direct product of cyclic groups
$\Z_{p_{1}^{r_{1}}} \times \Z_{p_{2}^{r_{2}}} \times \cdots \times \Z_{p_{k}^{r_{k}}}$.
Here we consider the case when $p$ does not divide each $p_j-1$ for all $j \in \{1,2,\cdots, k \}$,
and we also assume that $p=p_i$ for some $i \in \{1,2,\cdots, k \}$
to avoid the trivial case of abelian group $\Z_{N}\times\Z_p$.

For convenience, let $i=k$ and $r_k=n\geq 2$, then we have
\begin{equation}
\Z_{N}\rtimes_{\phi}\Z_{p} \cong
(\Z_{p_{1}^{r_{1}}} \times \Z_{p_{2}^{r_{2}}} \times
\cdots  \times \Z_{p_{k-1}^{r_{k-1}}} \times \Z_{p^n}) \rtimes_{\phi}\Z_{p}.
\label{eq:isomorphic}
\end{equation}
By an analogous proof of Proposition~\ref{Lem:note2},
we also note that, for each $h\in\Z_p$,
the automorphism $\phi(h)$ on $\Z_{p_{1}^{r_{1}}} \times \Z_{p_{2}^{r_{2}}} \times \cdots  \times \Z_{p^{{n}}}$
acts trivially on each component of $\Z_{p_{j}^{r_{j}}}$ such that $p$ differs from $p_{j}$.
In other words, there exists a homomorphism $\psi$ from $\Z_p$ to $\mathrm{Aut}(\Z_{p^{n}})$
such that $\phi(h)=\cI\times \psi(h)$ for each $h\in\Z_p$
where $\cI$ is the identity map on $\Z_{p_{1}^{r_{1}}} \times\cdots \times \Z_{p_{k-1}^{r_{k-1}}}$, and
\begin{equation}
\Z_{N}\rtimes_{\phi}\Z_{p}\cong
\Z_{p_{1}^{r_{1}}}  \times\cdots \times
 \Z_{p_{k-1}^{r_{k-1}}}\times (\Z_{p^{n}} \rtimes_{\psi}\Z_{p}).
 \label{eq:main}
\end{equation}

Furthermore, $\Z_{p_{1}^{r_{1}}} \times \cdots \times\Z_{p_{k-1}^{r_{k-1}}} $ is
a cyclic group of order $N/p^{n}$, thus we have
\begin{equation}
\Z_{N}\rtimes_{\phi}\Z_{p}\cong
\Z_{N/p^{n}}\times (\Z_{p^{n}} \rtimes_{\psi}\Z_{p}).
\end{equation}
Eq.~(\ref{eq:main}) implies that
solving HSSP on $\Z_{N}\rtimes_{\phi}\Z_{p}$ is essentially equivalent to
solving HSSP on $\Z_{N/p^{n}}\times (\Z_{p^{n}} \rtimes_{\psi}\Z_{p})$
because two groups are isomorphic.

Now let us consider the subgroups $K=\Z_{N/p^{n}}\times \Z_{p^{n}}\times \{0\}$,
 $H=\{(0,0)\}\times\Z_p$ of $G=\Z_{N/p^{n}}\times (\Z_{p^{n}}\rtimes_{\phi}\Z_{p})$
and the group action $\circ : G \times K \rightarrow K$ defined by
\begin{eqnarray}
\left(a,b,h \right)\circ \left(x,y,0 \right)
&=&\left(a,b,h \right)\left(x,y,0 \right)\left(0,0,-h \right)\nonumber\\
&=&\left((a,b)+\phi(h)(x,y), 0 \right),
\label{action3}
\end{eqnarray}
for any  $\left(a,b,h\right) \in G$  and $\left(x,y,0\right) \in K$
(or equivalently, for any $a, x\in \Z_{N/p^{n}}$, $b, y\in \Z_{p^n}$ and $h \in \Z_p$).
By using an analogous argument of Corollaries~\ref{H_P_t2} and \ref{H_tfix2},
it is straightforward to verify the one-to-one correspondence
between the set of closed subgroups of $\Z_{N/p^{n}}\times (\Z_{p^{n}}\rtimes_{\phi}\Z_{p})$
under the group action in Eq.~(\ref{action3})
and the set of closed subgroups of $\Z_{p^{n}} \rtimes_{\psi}\Z_{p}$.
Thus we have the following theorem about a natural reduction of HSSP on $\Z_{N}\rtimes_\phi\Z_p$
to HSSP on $\Z_{p^n}\rtimes_\phi\Z_p$ for some case of $N$.
\begin{Thm}
Let $N$ be a positive integer with a prime factorization $N=p_1^{r_1}p_2^{r_2}\cdots p_k^{r_k}$
and $p$ be an odd prime such that $p$ does not divide each $p_j-1$ for all $j \in \{1,2,\cdots, k\}$.
Then any HSSP defined on $\Z_{N}\rtimes_\phi\Z_p$ with respect to the group action in Eq.~(\ref{action3})
can be naturally reduced to the HSSP on $\Z_{p^n}\rtimes_\phi\Z_p$ with respect to the group action in Eq.~(\ref{action}).
\label{red2}
\end{Thm}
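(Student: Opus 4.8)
The plan is to reduce Theorem~\ref{red2} to the two-prime case already settled in Theorem~\ref{red1}, exploiting the structural isomorphism recorded in Eq.~(\ref{eq:main}). The first step is to confirm that $\phi(h)$ acts trivially on the entire factor $\Z_{N/p^{n}}$, not merely on a single coprime prime-power block. This is the multi-prime generalization of Proposition~\ref{Lem:note2}: writing $\Z_{N/p^{n}}\cong\Z_{p_{1}^{r_{1}}}\times\cdots\times\Z_{p_{k-1}^{r_{k-1}}}$, one applies the argument of Proposition~\ref{Lem:note2} coordinatewise. For each $j$ the component of $\phi(1)$ in $\Z_{p_j^{r_j}}^{*}$ must satisfy $a_j^{p}=1$; since $|\Z_{p_j^{r_j}}^{*}|=p_j^{r_j-1}(p_j-1)$ is coprime to $p$ (using $p\neq p_j$ and $p\nmid p_j-1$), the only solution is $a_j=1$. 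Hence $\phi(h)=\cI\times\psi(h)$ with $\cI$ the identity on $\Z_{N/p^{n}}$ and $\psi$ a homomorphism into $\mathrm{Aut}(\Z_{p^{n}})$, which is precisely Eq.~(\ref{eq:main}).

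Next I would characterize the closed subgroups and their stabilizers, replaying Corollaries~\ref{H_P_t2} and \ref{H_tfix2} in the present setting. Using the action in Eq.~(\ref{action3}) together with the factorization just established, a direct computation (as in Eq.~(\ref{fix2})) shows that if $(a,b,h)$ fixes some $(x,y,0)\in K$ then $a\equiv 0\pmod{N/p^{n}}$, so every stabilizing element has trivial first coordinate and lies in a conjugate $(0,t,0)H(0,-t,0)$ with $t\in\{0,1,\ldots,p-1\}$. This yields the set $\mathcal H'=\{(0,t,0)H(0,-t,0)\mid 0\le t\le p-1\}$ together with the natural bijection $t\mapsto t$ onto the set $\mathcal H=\{(t,0)H(-t,0)\mid 0\le t\le p-1\}$ attached to the HSSP on $\Z_{p^{n}}\rtimes_\psi\Z_{p}$, mirroring the correspondence already observed for $N=q^{s}p^{n}$.

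Finally I would transfer the oracle. From the conjugation computations preceding the statement, the level-set condition $H'\circ(x,y,0)=H'\circ(x',y',0)$ for $H'=(0,t,0)H(0,-t,0)$ splits into the coordinatewise requirements $x=x'\pmod{N/p^{n}}$ and an identity in the $\Z_{p^{n}}$ coordinate that is exactly the defining relation for the HSSP on $\Z_{p^{n}}\rtimes_\psi\Z_{p}$. Thus the oracle $f$ of Eq.~(\ref{orif}) descends to a reduced oracle $g$ on $\Z_{p^{n}}\rtimes\{0\}$ as in Eq.~(\ref{redf}), and recovering the hidden $H'\in\mathcal H'$ is equivalent to recovering its image in $\mathcal H$ under the bijection, exactly as in Theorem~\ref{red1}.

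I expect the only genuine obstacle to be the first step: verifying that $\phi$ acts trivially on all coprime factors simultaneously, which is where the hypothesis $p\nmid p_j-1$ for every $j$ is indispensable. Once the global splitting $\phi(h)=\cI\times\psi(h)$ is in hand, the remaining stabilizer characterization and oracle reduction are mechanical replays of the single-prime argument, since the $\Z_{N/p^{n}}$ coordinate is simply carried along untouched by the action.
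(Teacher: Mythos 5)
Your proposal is correct and follows essentially the same route as the paper: the paper likewise establishes the splitting $\phi(h)=\cI\times\psi(h)$ by applying the argument of Proposition~\ref{Lem:note2} componentwise to each coprime factor $\Z_{p_j^{r_j}}$ (which is exactly where $p\nmid p_j-1$ and $p\neq p_j$ enter), obtains the isomorphism of Eq.~(\ref{eq:main}), and then replays Corollaries~\ref{H_P_t2} and \ref{H_tfix2} to get the bijection between $\mathcal H'$ and $\mathcal H$ and the descent of the oracle to the $\Z_{p^n}$ coordinate. Your identification of the triviality of the action on $\Z_{N/p^n}$ as the one non-mechanical step matches the paper's emphasis as well.
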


Now, together with Corollary~\ref{The: QAHSSP1}, we have the following corollary,
which states the existence of a polynomial-time quantum algorithm
solving HSSP on $\Z_{N}\rtimes_\phi\Z_p$ for some case of $N$.

\begin{Cor}
\label{The: QAHSSP2}
Let $G=\Z_{N}\rtimes_\phi\Z_p$ be the semi-direct product of cyclic groups
with an odd prime $p$ and a positive integer $N$ with a prime factorization $N=p_1^{r_1}p_2^{r_2}\cdots p_k^{r_k}$
such that $p$ does not divide each $p_j-1$ for all $j \in \{1,2,\cdots, k\}$.
$K=\Z_{N}\times \{0\}$ and $H=\{0\}\times\Z_{p}$ are two subgroups of $G$
where $G$ acts on $K$ with respect to the group action in Eq.~(\ref{action3})
and $\mathcal H$ is the set of all conjugate groups of $H$ in $G$.
Then there exists a polynomial-time quantum algorithms solving HSSP on $G$.
\end{Cor}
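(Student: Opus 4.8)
The plan is to obtain the corollary by chaining the reduction already established in Theorem~\ref{red2} with the base-case algorithm of Corollary~\ref{The: QAHSSP1}, so that the only substantive work is to check that this composition remains polynomial-time. First I would invoke the isomorphism in Eq.~(\ref{eq:main}), which rewrites $G=\Z_{N}\rtimes_\phi\Z_p$ as $\Z_{N/p^n}\times(\Z_{p^n}\rtimes_\psi\Z_p)$, so that the given HSSP with the group action in Eq.~(\ref{action3}) is precisely of the form treated by Theorem~\ref{red2}. By that theorem the HSSP on $G$ reduces naturally to the HSSP on $\Z_{p^n}\rtimes_\phi\Z_p$ with the action in Eq.~(\ref{action}), the reduction being carried by the one-to-one correspondence between the closed-subgroup families $\mathcal H'$ and $\mathcal H$ identified in the discussion preceding the theorem.

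Next I would apply Corollary~\ref{The: QAHSSP1} to the reduced instance: it supplies a polynomial-time quantum algorithm that outputs the hidden closed subgroup in $\mathcal H$ on $\Z_{p^n}\rtimes_\phi\Z_p$. From this output the corresponding $H'\in\mathcal H'$ is recovered through the explicit correspondence $(t,0)H(-t,0)\leftrightarrow(0,t,0)H(0,-t,0)$, which amounts to reading off the residue $t\in\{0,1,\dots,p-1\}$ and is computable in time polynomial in $\log p$. Thus a solver for the reduced problem yields a solver for the original one.

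The step requiring the most care—though it is not a deep obstacle—is confirming that the reduction preserves efficiency. Here I would note two points. The reduced group $\Z_{p^n}\rtimes_\phi\Z_p$ has order $p^{n+1}$, which divides $|G|=Np$ since $p^n\mid N$; hence $\log|\Z_{p^n}\rtimes_\phi\Z_p|\le\log|G|$ and the reduced instance has size polynomial in that of the original. Moreover the reduced oracle $g$ of Eq.~(\ref{redf}) is obtained from $f$ simply by restricting to the $\Z_{p^n}$-component while fixing the $\Z_{N/p^n}$-component, so each evaluation of $g$ costs a single evaluation of $f$ together with an arithmetic operation computable in time polynomial in $\log|G|$. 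Consequently the entire procedure—forming the reduction, running the base-case quantum algorithm of Corollary~\ref{The: QAHSSP1}, and recovering $H'$ via the correspondence—runs in time polynomial in $\log|G|$, which establishes the existence of the claimed polynomial-time quantum algorithm for HSSP on $G$.
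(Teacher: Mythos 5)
Your proposal is correct and follows essentially the same route as the paper, which derives this corollary immediately by combining the reduction of Theorem~\ref{red2} with the base-case algorithm of Corollary~\ref{The: QAHSSP1}. The additional efficiency checks you supply (bounding the size of the reduced instance and the cost of simulating the reduced oracle) only make explicit what the paper leaves implicit.
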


\section{Summary}
\label{Conclusion}
We have first investigated algebraic properties of semi-direct product of cyclic groups,
and then have presented an efficient reduction scheme of HSSP on $\Z_{N}\rtimes\Z_{p}$ to its related HSP
for the case when any prime factor $q$ of $N$
satisfies the condition that $q-1$ is not divisible by $p$.
Finally, we have proposed an efficient quantum algorithm for HSSP on the group
by applying this reduction scheme to
an efficient quantum computational algorithm for the related HSP.

\section*{Acknowledgments}
This work was supported by Emerging Technology R\&D Center of SK Telecom.
JSK was supported by
Basic Science Research Program through the National Research Foundation of Korea (NRF)
funded by the Ministry of Education, Science and Technology (2012R1A1A1012246),
and SL was supported by Kyung Hee University Research Fund in 2012.

\end{document}